\newtheorem{lemma}{Lemma}
\newtheorem{theorem}{Theorem}
\begin{document}

\title{Millimeter-Wave Beamformed Full-dimensional MIMO Channel Estimation Based on Atomic Norm Minimization}



\author{Yingming~Tsai,~\IEEEmembership{Member,~IEEE,}
	Le~Zheng,~\IEEEmembership{Member,~IEEE,}
	and~Xiaodong~Wang,~\IEEEmembership{Fellow,~IEEE}\\
	Electrical Engineering Dept., Columbia Univ., New York, NY 10027}

\maketitle

\begin{abstract}
The millimeter-wave (mmWave)  full-dimensional (FD) MIMO system employs planar arrays at both the base station and user equipment and can simultaneously support both azimuth and elevation beamforming. In this paper, we propose atomic-norm-based methods for mmWave FD-MIMO channel estimation under both uniform planar arrays (UPA) and non-uniform planar arrays (NUPA). Unlike existing algorithms such as compressive sensing (CS) or subspace methods, the atomic-norm-based algorithms do not require to discretize the angle spaces of the angle of arrival (AoA) and angle of departure (AoD) into grids, thus provide much better accuracy in estimation. In the UPA case, to reduce the computational complexity, the original large-scale 4D atomic norm minimization problem is approximately reformulated as a semi-definite program (SDP) containing two decoupled two-level Toeplitz matrices. The SDP is then solved via the alternating direction method of multipliers (ADMM) where each iteration involves only closed-form computations. In the NUPA case, the atomic-norm-based formulation for channel estimation becomes nonconvex and a gradient-decent-based algorithm is proposed to solve the problem. Simulation results show that the proposed algorithms achieve better performance than the CS-based and subspace-based algorithms.

\end{abstract}
\textbf{Keywords:}   Full-dimensional (FD) MIMO, uniform planar  array (UPA), non-uniform planar  array (NUPA), atomic norm, channel estimation, millimeter-wave, alternating direction method of multipliers (ADMM), gradient descent. 

\section{Introduction\label{Section_Intro}}
Millimeter wave (mmWave) communications have been proposed as an important physical-layer technology for the $5$th generation (5G) mobile networks to provide multi-gigabit  services \cite{5G_it_work}.  Two prominent features of the mmWave spectrum are
the massive bandwidth available and the tiny wavelengths compared to conventional microwave bands, thus enabling dozens or even
hundreds of antenna elements to be accommodated at communication link ends
within a reasonable physical form factor.  This suggests that massive MIMO and mmWave technologies should be considered jointly to provide higher data rates and spectrum efficiency. In particular,  the mmWave full-dimensional  MIMO (FD-MIMO)  systems \cite{3D-MIMO},\cite{mmWave_BF_for_backhaul_and_small_cell}
employ uniform or non-uniform
planar arrays
 at both the basestation (BS) and user equipment (UE) and provide an extra degree of freedom in
the elevation-angle domain. Users can now be distinguished not only by their
AoAs in the azimuth  domain  but also by their AoDs in the elevation domain \cite{FD-MIMO_next_gen_cell_tech}.  In this paper, we consider channel estimation for mmWave FD-MIMO systems that simultaneously support both azimuth and elevation beamforming.   

The mmWave band  channel is significantly different from those in sub-6GHz bands. The key challenge in designing new radio access technologies for mmWave is how to overcome the much larger path-loss and reduce blockage probability. To that end, beamforming is essential in combating the serve path-loss for wireless system operating in mmWave bands \cite{Beamforming_Survey}. However, to estimate the full channel state information (CSI) under beamformed FD-MIMO is somehow challenging because the receiver typically only  obtains the beamformed CSI instead of full CSI. To address this issue, fast beam
scanning and searching techniques have been extensively studied \cite{mmWave_BF_for_backhaul_and_small_cell,Beam_codebook_vased_beamforming_protocol_for_mmWave}. The objective of beam scanning is to search for the best beamformer-combiner pair by letting the transmitter and receiver scan the adaptive sounding beams
chosen from pre-determined sounding beam codebooks.  However, the exhaustive search may be hampered by the high training overhead in practice and suffer from low spectral efficiency. Another approach is to  estimate the mmWave channel or its associated parameters, by exploiting the sparse scattering nature of the mmWave channels \cite{3D_mmWave_channel_model},\cite{3D_mmWave_channel_model_proposal},\cite{Spatially_Sparse-Precoding_in_MM_Wave}, that is,  mmWave channel estimation can be formulated as a sparse signal recovery problem \cite{Channel_Estimation_and_Hybrid_Precoding}, \cite{CS_based_multi_users_mmWave}, \cite{2D_MUSIC_mmWave},  \cite{Millimeter_Wave_MIMO_channel_estimation_Adaptive_CS} and solved using the compressive sensing (CS)-based approach \cite{Donoho_ComprssedSensing}.  In the CS-based approach,  a  sensing matrix needs to be constructed first, by dividing certain parameter space into a finite number of grids and  thus the channel estimation performance is limited by the grid resolution.   On the other hand, in \cite{book_mmWav_Communications},   a subspace-based mmWave MIMO channel estimation method  that makes use of the MUSIC algorithm is proposed.  A  2D-MUSIC algorithm for beamformed mmWave MIMO channel estimation is proposed in \cite{2D_MUSIC_mmWave} to further enhance the channel estimation performance. The MUSIC algorithm is able to identify multiple paths with high resolution but it is sensitive to antenna position, gain, and phase errors.

Recently, the atomic norm minimization  \cite{Compressed_Sensing_Off_the_Grid} has been applied to many signal processing problems such as  super-resolution 
frequency estimation \cite{Bhaskar_AtomicNormDenoise}, \cite{Gridless_super_resolution_direction_finding}, spectral estimation \cite{Atomic_Norm_Denoising}, AoA estimation,  \cite{ANM_DoA_estimation,ANM_2D_DoA_estimation}, uplink multiuser MIMO channel estimation \cite{ANM_MIMO_CHEST} and linear system identification
\cite{ANM_Linear_System_Identification}.   Under certain conditions, atomic norm minimization can achieve exact frequency localization, avoiding the effects of basis mismatch which can plague grid-based CS techniques. Different from the prior works such as CS-based and subspace-based channel estimation methods mentioned above, we  formulate the mmWave FD-MIMO channel estimation as an atomic norm minimization problem. Unlike \cite{ANM_MIMO_CHEST} that considers uplink multiuser MIMO channel estimation, in which the uniform linear array is assumed and only the AoA parameter is estimated, in this paper, we consider the mmWave beamformed FD-MIMO channel, which involves the estimation of both AoA and AoD. Hence, instead of one-dimensional (1D) atomic norm minimization, our problem is formulated as a four-dimensional (4D) atomic norm minimization problem. The 4D\ atomic norm minimization can be transformed into semi-definite program (SDP) which is of high dimensional and involves block Toeplitz  matrices, leading to very high computational complexity.  Therefore, we introduce a 4D atomic norm  approximation method to reduce the computational complexity and an efficient algorithm  based on the alternating direction method of multipliers (ADMM)  is derived. 

Recently, non-uniform planar array (NUPA) has attracted more interest due to its ability in reducing sidelobes and antenna correlation \cite{structured_non_unifromly_spaced_antenna_array,NUPA_mmwave}. NUPA can potentially increase the achievable multiplexing gain of mmWave FD-MIMO beamforming. However, the corresponding atomic norm minization problem cannot be transformed into an SDP  when the antennas are not uniformly placed \cite{Compressed_Sensing_Off_the_Grid}. Hence, we propose a gradient descent method for mmWave FD-MIMO channel estimation with NUPA.

The remainder of the paper is organized as follows. In Section \ref{Section_SystemModel}, the mmWave beamformed FD-MIMO channel model   is introduced. In Section \ref{Section_chest}, we formulate the mmWave FD-MIMO channel estimation as an atomic norm minimization problem for the case of UPA. In Section \ref{SEC_AN_APPROX}, we develop efficient algorithms for implementing the proposed atomic-norm-based channel estimator.  In Section \ref{Section_NUPA_chest}, we consider the case of NUPA\ and provide the formulation and algorithm for the atomic-norm-based channel estimator. In Section \ref{Section_Simulation}, simulation results are provided.  Finally, Section \ref{Section_Conclusions} concludes the paper. 

\section{System Descriptions and Background \label{Section_SystemModel}}
\subsection{System and Channel Models}
We consider a mmWave FD-MIMO system with $M$ receive antennas and $N$ transmit antennas that simultaneously supports elevation and azimuth beamforming. The channel matrix can be expressed in terms of transmit and receive array responses \cite{Spatially_Sparse-Precoding_in_MM_Wave}:
\begin{eqnarray} \label{Eq_2_2}
\mathbf H = \mathbf B \mathbf \Sigma \mathbf A^{H} = \sum_{l=1}^{L}\sigma_{l}
\mathbf b(\mathbf f_{l})  \mathbf a( \mathbf g_{l} )^{H}, 
\end{eqnarray} 
where $\left( \cdot \right)^{H}$ denotes the Hermitian transpose; the matrix $\mathbf \Sigma =\text{diag}(\bm{\sigma})=\text{diag}\left(\left[
\sigma_{1} \ \sigma_{2} \ldots \sigma_{L} \right]^{T} \right)$ is a diagonal
matrix with each $\sigma_{l} \in \mathbb{C}$ denoting the $l$-th multipath gain; $L$ denotes the number of paths; the matrices $\mathbf B = \left[ \mathbf b(\mathbf f_{1}) \ldots \mathbf b(\mathbf f_{L}) \right]$ and $\mathbf A = \left[ \mathbf a(\mathbf g_{1}) \ldots \mathbf a(\mathbf g_{L}) \right]$ denote the steering responses of the receive and transmit arrays, respectively. For a linear array with half-wavelength separation of adjacent antenna elements, the array response is in the form of a uniformly sampled complex sinusoid with frequency $x \in [-\frac{1}{2},\ \frac{1}{2})$: 
\begin{eqnarray} \label{Eq_uniform_sin}
\mathbf c_{n} \left( x \right) = \frac{1}{\sqrt{n}} \left[ 1\ e^{j2\pi x} \cdots e^{j2\pi \left( n-1 \right) x } \right]^{T} \in \mathbb{C}^{n \times 1}.
\end{eqnarray}
We assume that both the transmitter (Tx) and receiver (Rx)\ are equipped with uniformly spaced planar antenna arrays (UPA)s \cite{FD-MIMO_Using_Uniform_Planar_Arrays,3D_MIMO_ULA}, each with half-wavelength antenna element separations along the elevation-and-azimuth-axis. Then the Tx and Rx array responses can be expressed as \cite{3D_MIMO_ULA}
\begin{eqnarray} \label{Eq_2_3}
\mathbf a(\mathbf g_{l}) &=& \mathbf c_{N_{1}}\left( g_{l,1} \right) \otimes \mathbf c_{N_{2}}\left( g_{l,2} \right), \\ \label{Eq_2_3_b}
\mathbf b\left( \mathbf f_{l} \right) &=& \mathbf c_{M_{1}}\left( f_{l,1} \right) \otimes \mathbf c_{M_{2}}\left( f_{l,2} \right),
\end{eqnarray}
with
\begin{eqnarray} \label{Eq_g_f_definition}
\mathbf g_{l}&=&\left\{ g_{l,1} =\frac{1}{2} \sin\left(
\vartheta_{l}\right)\cos\left( \varphi_{l} \right), g_{l,2}= \frac{1}{2}\cos\left( \vartheta_{l} \right) \right\}, \\ 
\mathbf f_{l}&=&\left\{ f_{l,1} =\frac{1}{2} \sin\left(
\theta_{l}\right)\cos\left( \phi_{l} \right), f_{l,2}= \frac{1}{2}\cos\left(
\theta_{l} \right) \right\},
\end{eqnarray}
where  $\otimes$ denotes the Kronecker product; $\vartheta_{l}$, $\varphi_{l}$ denote elevation and azimuth angles of the angle of departure (AoD) of the $l$-th path, respectively; and $\theta_{l}$, $\phi_{l}$ denote elevation and azimuth angles of the angle of arrival (AoA), respectively. Here, $N_{1}$, $N_{2}$ denote the numbers of elevation and azimuth transmit antennas, respectively, and the total number of transmit antennas is $N=N_{1}N_{2}$. Similarly, $M_{1}$, $M_{2}$ denote the numbers of elevation and azimuth receive antennas, respectively, and the total number of receive antennas is $M=M_{1} M_{2}$. For the UPA configuration, it can resolve the AoA and AoD in $360^{\circ}$ range, thereby $\vartheta_{l}, \theta_{l}, \varphi_{l}, \phi_{l} \in [-\pi,\ \pi]$ and $g_{l,1} = \frac{1}{2}\sin\left(
\vartheta_{l}\right)\cos\left( \varphi_{l} \right) \in [-\frac{1}{2},\ \frac{1}{2}) $, $g_{l,2} = \frac{1}{2}\cos\left( \vartheta_{l} \right) \in [-\frac{1}{2},\ \frac{1}{2})$, $f_{l,1} = \frac{1}{2}\sin\left(
\theta_{l}\right)\cos\left( \phi_{l} \right) \in [-\frac{1}{2}, \frac{1}{2})$, $f_{l,2}  = \frac{1}{2}\cos\left( \theta_{l} \right) \in [-\frac{1}{2}, \frac{1}{2})$.

To estimate the channel matrix, the transmitter transmits $P$ distinct beams during $P$ successive time slots. i.e., in the $p$-th time slot, the beamforming vector $\mathbf p_{p} \in \mathbb{C}^{N\times 1}$ is selected from a set of unitary vectors in the form of Kronecker-product-based codebook, e.g., $\mathbf p_{p} = \mathbf p_{p,1} \otimes \mathbf p_{p,2}$ where $\mathbf p_{p,1} \in \mathbb{C}^{N_{1}}$ and  $\mathbf p_{p,2}\in \mathbb{C}^{N_{2}}$ are selected from two DFT codebooks of dimensions $N_1$ and $N_2$, respectively \cite{DFT_beamforming}. The $p$-th received signal vector can be expressed as
\begin{equation} \label{Eq_2_1}
\mathbf{y}_{p} = \mathbf H\mathbf p_{p} s_{p} +\mathbf w_{p}, 
\end{equation}
where  $\mathbf w_{p} \sim\ \mathcal{CN}\left(\mathbf 0, \sigma^{2}_{w}\mathbf{I}_{M} \right)$ is the additive white Gaussian noise (AWGN) with $\mathbf I_{M}$ denoting the $M\times M$ identity matrix, and $s_{p}$ denotes the pilot symbol in the $p$-th time slot. The receiver collects $\mathbf y_{p} \in \mathbb{C}^{M \times 1}$ for $p=1, \ldots P$ and concatenates them to obtain the signal matrix 
\begin{eqnarray} \label{Eq_2_4}
\mathbf Y = \left[ \mathbf y_{1}\ \mathbf y_{2} \ldots \mathbf y_{P} \right] = \mathbf H \mathbf P \mathbf S + \mathbf W = \mathbf B \mathbf \Sigma \mathbf A^{H} \mathbf P \mathbf S +\mathbf W,
\end{eqnarray} 
where $\mathbf P = \left[ \mathbf p_{1}\ \mathbf p_{2} \ldots \mathbf p_{P} \right] \in \mathbb{C}^{N \times P}$,  $\mathbf W = \left[ \mathbf w_{1}\ \mathbf w_{2} \ldots \mathbf w_{P} \right]\in \mathbb{C}^{M \times P}$ and $\mathbf S = \text{diag} \left(\left[s_{1}\ s_{2} \ldots s_{P}\right]\right) \in \mathbb{C}^{P \times P}$. For simplicity, we assume that $\mathbf S=\sqrt{P_{t}}\mathbf I_{P}$, where $P_{t}$ is the power of the pilot symbol. Then we have 
\begin{eqnarray} \label{Eq_Heff}
\mathbf Y=  \sqrt{P_{t}}\mathbf H \mathbf P  + \mathbf W = \sqrt{P_{t}}\mathbf B \mathbf \Sigma \mathbf A^{H} \mathbf P +\mathbf {W}.
\end{eqnarray} 
Our goal is to estimate the channel matrix $\mathbf H \in \mathbb{C}^{M \times N}$ from the measurements $\mathbf Y \in \mathbb{C}^{M \times P}$. Note that the number of pilots is usually smaller than the number of transmit antennas, i.e., $P<N$. Hence, we need to exploit the sparsity of $\mathbf H$ for its estimation, which will be discussed in the next section.

\subsection{Existing mmWave Channel Estimation Methods}
Before describing our proposed mmWave channel estimator, we briefly discuss some existing mmWave channel estimation methods 
\cite{Channel_Estimation_and_Hybrid_Precoding,Exploiting_spatial_sparsity_for_estimating_channels_of_hybrid_MIMO_systems_in_mmWave,Millimeter_Wave_MIMO_channel_estimation_Adaptive_CS,2D_MUSIC_mmWave} which can be divided into two categories.  

\subsubsection {CS-based mmWave channel estimators}
The mmWave channel is usually composed of small number of propagation paths and CS-based algorithms have been developed  \cite{Channel_Estimation_and_Hybrid_Precoding, Exploiting_spatial_sparsity_for_estimating_channels_of_hybrid_MIMO_systems_in_mmWave,Millimeter_Wave_MIMO_channel_estimation_Adaptive_CS}
for channel estimation.  First the dictionary matrices $\mathbf A_{D}\in \mathbb{C}^{N\times N_{G}}$ and $\mathbf B_{D} \in \mathbb{C}^{M\times N_{G}}$ are constructed based on quantized AoD angle of the transmitter and AoA angle of the receiver. The AoDs and AoAs are assumed to be taken from a uniform grid of $N_{G}$ points with $N_{G} \gg L$. The resulting dictionary matrix is expressed (take the transmitter $\mathbf A_{D}$ for example, the receiver dictionary matrix $\mathbf B_{D}$ is  similar.)
\begin{eqnarray}
\mathbf A_{D} = \left[ \mathbf a(\mathbf{\bar g}_{1}) \  \mathbf a(\mathbf{\bar g}_{2}) \ldots \mathbf a(\mathbf{\bar g}_{N_{G}}) \right], 
\end{eqnarray}
where $\mathbf{\bar g}_{i}= \left\{{\bar g}_{i,1},  {\bar g}_{i,2} \right\} = \left\{ \frac{1}{2}\sin\left(\bar
\vartheta_{i}\right)\cos(\bar \varphi_{i}), \frac{1}{2}\cos(\bar \vartheta_{i})  \right\}$ with $\bar \vartheta_{i}=\frac{\left( i-1 \right)2\pi}{N_{G}}-\pi$, $\bar \varphi_{i}=\frac{\left( i-1 \right)2\pi}{N_{G}}-\pi$ denotes the transmit array response vector for the grid point $\bar \vartheta_{i}$ and $\bar \varphi_{i}$ for $i=1, 2, \ldots, N_{G}$. The size $N_G$ of the angle grids can be set according to the desired angular resolution. On this basis, the received signal $\mathbf Y$ in \eqref{Eq_Heff} can be vectorized as \cite{Channel_Estimation_and_Hybrid_Precoding}
\begin{eqnarray}\label{vector_data_y}
\mathbf y &=& \text{vec}\left(\mathbf Y \right)  = \sqrt{P_{t}} \left(\mathbf P^{T} \otimes \mathbf
I_{M}\right) \text{vec}\left(\mathbf H \right) + \mathbf w  \\
&=& \sqrt{P_{t}}  {\left(\mathbf P^{T} \otimes \mathbf
I_{M}\right) \left( \mathbf A^{*}_{D} \odot \mathbf{B}_{D} \right)}\mathbf x  + \mathbf w = \sqrt{P_{t}}\mathbf G \mathbf x + \mathbf w, 
\end{eqnarray}
where $ \odot $ denotes the matrix Khatri-Rao products,  $\left( \cdot \right)^{T}$ denotes the  transpose operation, $\left( \cdot \right)^{*}$ denotes the complex conjugate, $\mathbf x \in \mathbb{C}^{N_{G}^{4}}$ is a sparse vector that has non-zero elements in the locations associated with the dominant paths. Note that the angle spaces of interest are discretized into a large number of grids, and the actual AoA and AoD angles may not exactly reside on the predefined grids. Those off-grid angles can lead to mismatches in the channel model and degrade the estimation performance.

\subsubsection{Subspace-based mmWave channel estimators}
Another existing approach to mmWave channel estimation is based on the subspace methods such as the MUSIC algorithm \cite{2D_MUSIC_mmWave}. The MUSIC algorithm  firstly calculates the covariance matrix  of the received signal $\mathbf Y$ and then finds the signal and noise subspaces via eigendecomposition. It then estimates each channel path's array response, i.e., $\mathbf {\hat g}_{l}$ and $\mathbf {\hat f}_{l}$ for $l=1,2,\ldots,\hat L$, where $\hat L$ is the estimated number of paths, by exploiting the orthogonality between the signal and noise subspaces. Finally, each channel path's coefficient, i.e., $\hat \sigma_{l}$ can be estimated via the least-squares (LS) method. The MUSIC algorithm has been popular for its good resolution and accuracy in AoD/AoA estimation \cite{MUSIC_DoA_Estimation}, \cite{Joint_Transmission_Reception_Diversity_Smoothing}. However, it is also reported that the off-grid CS method \cite{Compressed_Sensing_Off_the_Grid} can outperform the MUSIC algorithm in terms of estimation accuracy in noisy environments \cite{Atomic_Norm_Denoising,Exact_Joint_Sparse}.  


\section{Channel Estimation via Atomic Norm Minimization\label{Section_chest}}
As explained in the previous section, the performance of the  mmWave channel estimators based on on-grid methods such as CS can be degraded due to grid mismatch. In this section, we propose a new mmWave channel estimator based on an off-grid CS method, i.e., the atomic norm minimization method. 
\subsection{Background on Multi-dimensional Atomic Norm}
First we briefly introduce the concept of multi-dimensional atomic norm \cite{Vandermonde_Decomposition_Multilevel_Toeplitz_Matrices}. A $d$-dimensional ($d$-dim) atom is defined as  
\begin{eqnarray} \label{d_dim_atom}
\mathbf q_{d}\left( x_{1}, \ldots, x_{d} \right) = \mathbf c_{n_{1}} \left( x_{1} \right) \otimes \ldots \otimes \mathbf c_{n_{d}} \left( x_{d} \right), 
\end{eqnarray}
where $n_{i} $ is the length of the normalized vector $\mathbf c_{n_{i}}\left(x_{i}\right)$ defined in \eqref{Eq_uniform_sin} and $x_{i} \in \left[-\frac{1}{2}, \frac{1}{2} \right)$ for $i=1, 2,\ldots, d$.
The $d$-dim atomic set is then given by 
\begin{eqnarray} \label{d_dim_atom_set}
\mathcal{A}= \left\{ \mathbf q_{d}\left( x_{1}, \ldots, x_{d} \right) : x_{i} \in \left[-\frac{1}{2}, \frac{1}{2} \right),\ i=1,\ldots, d  \right\}.
\end{eqnarray}
For any vector $\mathbf t_{d}$  of the form $\mathbf t_{d} = \sum\limits_{l} { {\alpha_{l} \mathbf q_{d}}(x_{l,1}, x_{l,2}, \ldots, x_{l,d})}$, its $d$-dim atomic norm with respect to $\mathcal{A}$ is defined as 
\begin{eqnarray} \label{d_dim_atomic_norm}
{\left\| \mathbf t_{d} \right\|_{\cal A}}  &=& \mathop{\inf} \left\{ t :
\mathbf t_{d} \in t {\text {conv}} \left(\cal A \right) \right\}, \nonumber \\
&=&  \mathop {\inf }\limits_{\scriptstyle{ x_{l,1}, x_{l,2}, \ldots, x_{l,d}} \in [-\frac{1}{2},\frac{1}{2}) \hfill\atop
\scriptstyle{ \alpha_{l}} \in \mathbb{C} \hfill} \left\{
{\left. {\sum\limits_{l} \left\vert \alpha_{l} \right\vert} \right| \mathbf t_{d} = \sum\limits_{l} { {\alpha_{l} \mathbf q_{d}}(x_{l,1}, x_{l,2}, \ldots, x_{l,d})}} \right\},
\end{eqnarray}
where  ${\text {conv}} \left(\cal A \right) $ is the convex hull of $\cal A$. The  $d$-dim atomic norm of $\mathbf t_{d}$ has following
equivalent form \cite{Vandermonde_Decomposition_Multilevel_Toeplitz_Matrices}:
\begin{eqnarray} \label{anm_HD}
\left\Vert \mathbf t_{d}\right\Vert_{\mathcal{A}} = \inf_{ \mathcal{U}_{d} \in \mathbb{C}^{(2n_{d}-1) \times (2n_{d-1}-1) \times \ldots \times (2 n_{1}-1)} , t \in \mathbb{R}}  
\begin{Bmatrix}
\frac{1}{2n_{1}n_{2}\ldots n_{d}}\text{Tr}\left(  \mathbb T_d(\mathcal U_{d}) \right) + \frac{1}{2}
t        \\
\\ 
\text{s.t.}\left[ {\begin{array}{*{20}{c}}
        \mathbb T_d(\mathcal U_{d}) & \mathbf t_{d} \\
        \mathbf t_{d}^{H} & t
        \end{array}} \right] \succeq 0 
\end{Bmatrix}, 
\end{eqnarray}
where $\text{Tr}\left(\cdot\right)$ is the trace of the input matrix, $\mathcal U_{d} \in \mathbb{C}^{(2n_{d}-1) \times (2n_{d-1}-1) \times \ldots \times (2 n_{1}-1)}$ is a $d$-way tensor and $\mathbb T_d(\mathcal U_d)$ is a $d$-level block Toeplitz, which is defined recursively as follows. Denote $\mathbf
n_d = (n_{d}, n_{d-1}, \ldots, n_{1})$ and $\mathcal U_{d-1}(i) = \mathcal U_{d}(i,:,:,...,:)$ for $i = -n_{d}+1,-n_{d}+2,...,n_{d}-1$. For $d=1$, $\mathbf{n}_{1}=\left(n_{1}\right)$ and
$\mathbb T_1(\mathbf u_1) = \text{Toep}(\mathbf u_1)$ with $\mathbf u_1 \in \mathbb{C}^{(2 n_1
- 1) \times 1}$, i.e., 
\begin{eqnarray}
\label{eq:T}
\mathbb T_1(\mathbf u_1) = \text{Toep}(\mathbf u_1) = \left[ {\begin{array}{*{20}{c}}
        {u_{1}(0)}&{{u_{1}(1)}}&{\cdots}&{{u_{1}(n_1-1)}}\\
        {u_{1}(-1)}&{u_{1}(0)}&{\cdots}&{u_{1}(n_1-2)}\\
        {\vdots}&{\vdots}&{\ddots}&{\vdots}\\
        {u_{1}(1-n_1)}&{u_{1}(2 - n_1)}&{\cdots}&{u_{1}(0)}
        \end{array}} \right].
\end{eqnarray}
For $d \geq 2$, we have
\begin{eqnarray}  \label{eq:Toeplitz2}
\mathbb T_d(\mathcal {U}_d) = \left[ {\begin{array}{*{20}{c}}
        {\mathbb T_{d-1}(\mathcal U_{d-1}(0))}&{\mathbb T_{d-1}(\mathcal
U_{d-1}(1))}&{\ldots}&{\mathbb T_{d-1}(\mathcal U_{d-1}(n_{d}-1))}\\
        {\mathbb T_{d-1}(\mathcal U_{d-1}(-1))}&{\mathbb T_{d-1}(\mathcal
U_{d-1}(0))}&{\ldots}&{\mathbb T_{d-1}(\mathcal U_{d-1}(n_{d}-2))}\\
        {\vdots}&{\vdots}&{\ddots}&{\vdots}\\
        {\mathbb T_{d-1}(\mathcal U_{d-1}(1-n_{d}))}&{\mathbb T_{d-1}(\mathcal
U_{d-1}(2-n_{d}))}&{\ldots}&{\mathbb T_{d-1}(\mathcal U_{d-1}(0))}
        \end{array}} \right].
\end{eqnarray}

\subsection{Atomic Norm Minimization Formulation\label{2D_atomic_norm_chest}}

In this subsection, we formulate the atomic norm minimization problem for channel estimation. First, we vectorize the mmWave FD-MIMO channel matrix $\mathbf H$ in \eqref{Eq_2_2} as
\begin{eqnarray} \label{Eq_3B_1}
\mathbf h &=& \text{vec}(\mathbf H) = \sum_{l=1}^{L} \sigma_{l} \mathbf a(\mathbf g_{l})^{*} \otimes \mathbf b(\mathbf f_{l}) \nonumber \\
&=& \sum_{l=1}^{L} \sigma_{l} \Big( \mathbf c_{N_{1}}\left( g_{l,1} \right) \otimes \mathbf c_{N_{2}} \left( g_{l,2} \right) \Big)^{*} \otimes \Big(\mathbf c_{M_{1}} \left( f_{l,1} \right) \otimes \mathbf c_{M_{2}} \left( f_{l,2} \right)\Big) \nonumber \\
&=& \sum_{l=1}^{L} \sigma_{l} \mathbf c_{{N_1}}^*({g_{l,1}})
\otimes \mathbf c_{{N_2}}^*({g_{l,2}}) \otimes {\mathbf c_{{M_1}}}({f_{l,1}})
\otimes {\mathbf c_{{M_2}}}({f_{l,2}}).  
\end{eqnarray} 
Comparing \eqref{d_dim_atomic_norm} and \eqref{Eq_3B_1}, for the mmWave FD-MIMO channel with UPA  configuration, the atom has the form of
\begin{eqnarray} \label{atom4D_definition}
\mathbf q_{4}\left( \mathbf g, \mathbf f \right) =  \mathbf c_{{N_1}}^*({g}_{1}) \otimes \mathbf c^{*}_{N_{2}}({g}_{2}) \otimes {\mathbf c_{{M_1}}}({f}_{1}) \otimes {\mathbf c_{{M_2}}}({f}_{2}), 
\end{eqnarray}
and the  set of atoms is defined as the collection of all normalized 4D complex sinusoids: $ \mathcal{A} = \left\{\mathbf q_{4}\left( \mathbf g, \mathbf f \right) : \mathbf f \in [-\frac{1}{2},\ \frac{1}{2}) \times [-\frac{1}{2},\ \frac{1}{2}),\ \mathbf g \in [-\frac{1}{2},\ \frac{1}{2}) \times [-\frac{1}{2},\ \frac{1}{2}) \right\}$ \cite{Compressive_Recovery_2D_off_grid,ChiC15}. The 4D atomic norm  for any $\mathbf h$ defined in \eqref{Eq_3B_1} can be written as \cite{Compressive_Recovery_2D_off_grid}:
\begin{eqnarray}
\label{eq:atomnorm_2D}
\left\| \mathbf h \right\|_{\cal A} = \inf_{\substack{
\mathbf f_{l} \in [-\frac{1}{2},\ \frac{1}{2}) \times [-\frac{1}{2},\ \frac{1}{2}),\\ \mathbf g_{l} \in [-\frac{1}{2},\ \frac{1}{2}) \times [-\frac{1}{2},\ \frac{1}{2}),\\ \sigma_{l} \in \mathbb{C}
  }} \left\{ \left. \sum\limits_{l} {\left\vert\sigma_{l}\right\vert}
\right| \mathbf h = \sum\limits_{l} \sigma_{l} \mathbf q_{4} \left( \mathbf g_{l}, \mathbf f_{l} \right)  \right\}.  
\end{eqnarray}
The atomic norm can enforce sparsity in the atom set $\mathcal{A}$. On this basis, an optimization problem will be formulated for the estimation of the path frequencies  $\left\{\mathbf f_{l}, \mathbf g_{l}\right\}$. For the convenience of calculation, we will use the equivalent form of the atomic norm given by \eqref{anm_HD}, i.e.,
\begin{eqnarray} \label{anm_2D}
\left\Vert \mathbf h\right\Vert_{\mathcal{A}} = \inf_{\substack{\mathcal U_{4} \in \mathbb{C}^{(2N_{1}-1) \times (2N_{2}-1) \times (2M_{1}-1) \times (2M_{2}-1)}, \\t \in \mathbb{R}}}  
\begin{Bmatrix}
\frac{1}{2MN}\text{Tr}\left(  \mathbb T_4(\mathcal U_{4}) \right) + \frac{1}{2}
t        \\
\\ 
\text{s.t.}\left[ {\begin{array}{*{20}{c}}
        \mathbb T_4(\mathcal U_{4}) & \mathbf h \\
        \mathbf h^{H} & t
        \end{array}} \right] \succeq 0 
\end{Bmatrix}, 
\end{eqnarray}
where $\mathbb T_{4}(\mathcal U_{4})$ is a $4$-level Toeplitz matrix defined in \eqref{eq:Toeplitz2}. Define the minimum frequency separations as
\begin{eqnarray}
\Delta_{\min,f_{i}} &=& \min_{l \neq l'} \min \{ |f_{l,i}-f_{l',i}|, 1 - |f_{l,i}-f_{l',i}| \}, \\
\Delta_{\min,g_{i}} &=& \min_{l \neq l'} \min \{ |g_{l,i}-g_{l',i}|, 1 - |g_{l,i}-g_{l',i}| \},
\end{eqnarray}
for $i = 1,2$. To show the connection between the atomic norm and the channel matrix, we obtain the following theorem via extending Theorem 1.2 in \cite{candes2014towards} for 1D atomic norm to 4D atomic norm.  
\begin{theorem} \label{Theorem1}
   If the path component frequencies are sufficiently separated, i.e.,
        \begin{eqnarray}\label{freq_f_separation}
        \Delta_{\min,f_{i}} &\geq& \frac{1}{\left\lfloor(M_{i}-1)/4\right\rfloor},\\ \label{freq_g_separation}
        \Delta_{\min,g_{i}} &\geq& \frac{1}{\left\lfloor(N_{i}-1)/4\right\rfloor},
        \end{eqnarray}
        for $i=1,2$, then we have $\|\mathbf h \|_{{\cal A}} = \sum_{l} \left\vert\sigma_l\right\vert$, so the component atoms of $\mathbf h$ can be uniquely located via computing its atomic norm.
\end{theorem}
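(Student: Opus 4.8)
The plan is to prove the two inequalities $\|\mathbf{h}\|_{\mathcal{A}} \le \sum_{l}|\sigma_{l}|$ and $\|\mathbf{h}\|_{\mathcal{A}} \ge \sum_{l}|\sigma_{l}|$ separately. The first is immediate: the representation of $\mathbf{h}$ in \eqref{Eq_3B_1} is a feasible point for the infimum defining the atomic norm in \eqref{eq:atomnorm_2D}, so $\|\mathbf{h}\|_{\mathcal{A}} \le \sum_{l}|\sigma_{l}|$. All of the work lies in the reverse inequality, which I would establish by exhibiting a dual certificate.

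First I would record the dual characterization of the atomic norm. Since the dual norm of $\|\cdot\|_{\mathcal{A}}$ is $\sup_{\mathbf{a}\in\mathcal{A}}|\langle\cdot,\mathbf{a}\rangle|$, one has
\[
\|\mathbf{h}\|_{\mathcal{A}} = \sup_{\mathbf{q}} \left\{ \operatorname{Re}\langle \mathbf{q}, \mathbf{h}\rangle \ : \ \bigl|\langle \mathbf{q}, \mathbf{q}_{4}(\mathbf{g},\mathbf{f})\rangle\bigr| \le 1 \ \text{ for all } (\mathbf{g},\mathbf{f}) \right\}.
\]
Writing $Q(\mathbf{g},\mathbf{f}) = \langle \mathbf{q}, \mathbf{q}_{4}(\mathbf{g},\mathbf{f})\rangle$, the constraint says that $Q$ is a $4$D trigonometric polynomial of the same bandwidth as the atoms, bounded by one in modulus. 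It therefore suffices to construct such a $Q$ obeying the interpolation conditions $Q(\mathbf{g}_{l},\mathbf{f}_{l}) = \operatorname{sign}(\sigma_{l})$ at the $L$ true frequency points and $|Q(\mathbf{g},\mathbf{f})| \le 1$ everywhere, with strict inequality off the support. Given such a $Q$, we obtain $\sum_{l}|\sigma_{l}| = \operatorname{Re}\langle \mathbf{q},\mathbf{h}\rangle \le \|\mathbf{h}\|_{\mathcal{A}}$, which closes the argument; the strict inequality off the support simultaneously forces any atomic decomposition to be concentrated exactly on $\{(\mathbf{g}_{l},\mathbf{f}_{l})\}$, giving the uniqueness claim.

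To build $Q$ I would tensorize the one-dimensional construction of \cite{candes2014towards} to four dimensions. Let $K(\mathbf{g},\mathbf{f}) = K_{N_{1}}(g_{1})\,K_{N_{2}}(g_{2})\,K_{M_{1}}(f_{1})\,K_{M_{2}}(f_{2})$, where each $K_{n}$ is the squared Fej\'er kernel of the order fixed by the corresponding array dimension, so that $K$ has precisely the bandwidth of $\mathbf{q}_{4}$. Set
\[
Q(\mathbf{g},\mathbf{f}) = \sum_{l=1}^{L} \alpha_{l}\, K(\mathbf{g}-\mathbf{g}_{l},\,\mathbf{f}-\mathbf{f}_{l}) + \sum_{l=1}^{L} \boldsymbol{\beta}_{l}^{T}\, \nabla K(\mathbf{g}-\mathbf{g}_{l},\,\mathbf{f}-\mathbf{f}_{l}),
\]
and determine $\{\alpha_{l}\}\subset\mathbb{C}$ and $\{\boldsymbol{\beta}_{l}\}\subset\mathbb{C}^{4}$ by imposing $Q(\mathbf{g}_{l},\mathbf{f}_{l})=\operatorname{sign}(\sigma_{l})$ together with the vanishing-gradient conditions $\nabla Q(\mathbf{g}_{l},\mathbf{f}_{l})=\mathbf{0}$. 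The gradient conditions make each support point a critical point of $|Q|$ and are what render the bound $|Q|\le 1$ attainable. Since $K$ factorizes, every entry of the resulting $5L\times 5L$ interpolation system is a product of one-dimensional kernel values and derivatives, which can be estimated one axis at a time.

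The main obstacle is verifying the two global properties of $Q$: invertibility of the interpolation system (so that $Q$ exists) and the pointwise bound $|Q(\mathbf{g},\mathbf{f})|\le 1$ with equality only at the support. This is exactly where the hypotheses \eqref{freq_f_separation}--\eqref{freq_g_separation} enter, since each threshold $1/\lfloor(M_{i}-1)/4\rfloor$ and $1/\lfloor(N_{i}-1)/4\rfloor$ is precisely the one-dimensional minimum-separation bound of \cite{candes2014towards} applied to the corresponding axis of the planar arrays. I would establish invertibility by a Schur-complement / diagonal-dominance argument: coordinatewise separation makes the off-diagonal kernel and derivative contributions small relative to the diagonal, so the system is a controlled perturbation of a block-scaled identity. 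For the pointwise bound I would partition the frequency domain into small neighborhoods of the support points and their complement; near each $(\mathbf{g}_{l},\mathbf{f}_{l})$ a second-order Taylor expansion exploiting $\nabla Q=\mathbf{0}$ shows the Hessian is negative definite, so $|Q|<1$ except at the center, while on the complement I would bound $|Q|$ by summing the localized tails of the squared Fej\'er kernel and its derivatives. The multidimensional bookkeeping of these tail sums is the technically heaviest part; the tensor-product structure of $K$ and the per-axis separation keep it tractable by reducing the four-dimensional estimates to products of the established one-dimensional ones.
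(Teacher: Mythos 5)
Your proposal is correct and follows essentially the same route as the paper, which itself only sketches the argument by asserting that Theorem 1.2 of \cite{candes2014towards} extends via a dual polynomial built from a 4D interpolation kernel --- precisely the tensorized squared-Fej\'er certificate with sign-interpolation and vanishing-gradient conditions that you describe. Your write-up is in fact more explicit than the paper's about where the per-axis separation thresholds enter (invertibility of the interpolation system and the global bound $|Q|\le 1$), so there is nothing to correct.
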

The proof follows the same line as that in Theorem 1.2 \cite{candes2014towards}, with the dual polynomial constructed by interpolation with a 4D kernel. The theorem holds because all bounds in the proof of [Theorem 1.2, 34] hold by leveraging the 1D results.

To estimate the mmWave FD-MIMO channel $\mathbf H$ in \eqref{Eq_2_2} based on the signal $\mathbf Y$ in \eqref{Eq_Heff}, we then formulate the following optimization problem:
\begin{eqnarray} \label{Eq_2D_atomic_norm_opt}
\mathbf{\hat h} = \min_{\mathbf h \in \mathbb{C}^{MN}} \mu {\left\| \mathbf h \right\|_{\cal A}}\ + \frac{1}{2} \left\Vert\mathbf{y} - \sqrt{P_t} \left(\mathbf P^{T} \otimes \mathbf I_{M}\right) \mathbf h \right\Vert^{2}_{2},
\end{eqnarray}
where $\mathbf{y} = \text{vec}(\mathbf {Y})$ is given by \eqref{vector_data_y} and $\mu \propto \sigma_{w}\sqrt{MN\log\left( MN \right)} $ is a weight factor \cite{Super-Resolution-Radar}. Using \eqref{anm_2D}, \eqref{Eq_2D_atomic_norm_opt} can be equivalently formulated as a semi-definite program (SDP): 
\begin{eqnarray} \label{Eq_2D_atomic_norm_sdp}
\min_{\substack{ \mathcal U_{4} \in
\mathbb{C}^{(2N_{1}-1) \times (2N_{2}-1) \times (2M_{1}-1) \times (2M_{2}-1)},\\\mathbf h \in \mathbb{C}^{MN},\ t \in \mathbb{R}}} && \frac{\mu}{2MN}\text{Tr}
\left( \mathbb T_4(\mathcal U_{4}) \right) + \frac{\mu}{2} t + \frac{1}{2} \left\Vert\mathbf{y} - \sqrt{P_t} \left(\mathbf P^{T} \otimes \mathbf I_{M}\right) \mathbf h \right\Vert^{2}_{2}\
\nonumber \\ \text{s.t} &&  \left[ {\begin{array}{*{20}{c}}
{\mathbb T_4(\mathcal U_{4}) }& \mathbf h \\
{\mathbf h^{H}}& {t}
\end{array}} \right] \succeq\ 0. 
\end{eqnarray}
The above problem is convex, and can be solved by using a standard convex solver. Suppose the solution to \eqref{Eq_2D_atomic_norm_sdp} is $\mathbf{\hat h}$. Then the estimated channel matrix is given by $\mathbf{\hat H} = {\rm vec}^{-1}\left( \mathbf{\hat h} \right)$ where ${\rm vec}^{-1}(\cdot)$ is the inverse operation of ${\rm vec}(\cdot)$.

\section{Efficient Algorithm for Channel Estimation under UPA}\label{SEC_AN_APPROX}
\subsection{A Formulation Based on 2D MMV Atomic Norm}

Note that the dimension of the positive semidefinite matrix in \eqref{Eq_2D_atomic_norm_sdp} is $(MN+1) \times (MN+1)$, and the 4D atomic norm minimization formulation is of high computational complexity and has large memory requirements. To reduce the complexity, we can treat $\mathbf Y$ as 2D multiple measurement vectors (MMV) \cite{Exact_Joint_Sparse} in transmit and receive dimensions.

Unlike the 4D atomic norm that is calculated with input vector $\mathbf h$, the MMV atomic norm is calculated with the matrix input $\mathbf H$. Specifically, we define the atom $\mathbf{\bar Q} \left(\mathbf f, \mathbf{\bar a} \right)
= \mathbf b\left(\mathbf f \right) \mathbf{\bar a}^{H}$ with $\mathbf f \in
[-\frac{1}{2},\ \frac{1}{2}) \times [-\frac{1}{2},\ \frac{1}{2})$, and $\mathbf{\bar
        a} \in \mathbb{C}^{N \times 1}$ with $\left\Vert \mathbf{\bar a} \right\Vert_{2}=1$. Correspondingly, the atom set is defined as
\begin{eqnarray}
\label{eq:1DMMV_atoms_set_def}
{\cal{A}_{\text{MMV}}} = \left\{ \mathbf{\bar Q}\left( \mathbf f, \mathbf{\bar a}
\right) : \mathbf f \in [-\frac{1}{2},\ \frac{1}{2}) \times [-\frac{1}{2},\
\frac{1}{2}), \left\Vert \mathbf{\bar a} \right\Vert_{2}=1 \right\}.
\end{eqnarray}
It is worth noting that $\mathbf{\bar a}$ is not restricted by the structural constraint in \eqref{Eq_2_3}. With \eqref{eq:1DMMV_atoms_set_def}, we extend the 1D MMV atomic norm \cite{Exact_Joint_Sparse} to the 2D MMV atomic norm of $\mathbf H$ defined by
\begin{eqnarray} \label{eq:atomic_norm_def}
{\left\| \mathbf H \right\|_{\cal A_{\text{MMV}}}}= \mathop {\inf }_{\substack{
                \mathbf f_{l} \in [-\frac{1}{2},\ \frac{1}{2}) \times [-\frac{1}{2},\ \frac{1}{2}),\\ \bar{\mathbf a}_l \in \mathbb{C}^{N \times 1},\ \sigma_{l} \in \mathbb{C}
        }} \left\{ {\left.
        {\sum\limits_{l} |\sigma_l| } \right|
        \mathbf
        H = \sum\limits_{l} \sigma_l \mathbf{\bar Q}\left( \mathbf f_{l},
        \mathbf{\bar a}_l  \right), \left\Vert \mathbf{\bar a} \right\Vert_{2}=1 } \right\}.
\end{eqnarray}
This atomic norm is equivalent to the solution of the following SDP \cite{Exact_Joint_Sparse}:
\begin{eqnarray}
\label{eq:AN_SDP_1D}
\left\Vert \mathbf H\right\Vert_{\mathcal{A}_{\text{MMV}}} = \inf_{\mathbf
        U_{2} \in \mathbb{C}^{(2M_2-1) \times (2M_1-1)},
        \mathbf X \in \mathbb{C}^{N\times N}}  
\begin{Bmatrix}
\frac{1}{2M}\text{Tr} \left( \mathbb T_2(\mathbf U_{2}) \right) + \frac{1}{2N}\text{Tr}
\left( \mathbf X\right)     \\
\\ 
\text{s.t.} \left[ {\begin{array}{*{20}{c}}
        \mathbb T_2(\mathbf U_{2}) & \mathbf H \\
        \mathbf H^{H} & \mathbf X
        \end{array}} \right] \succeq 0 
\end{Bmatrix},
\end{eqnarray}
where $\mathbf X$ is constrained to be a Hermitian matrix. Then using \eqref{Eq_Heff}, we can
formulate the following optimization problem for channel estimation:
\begin{eqnarray}\label{2D_MMV_AN}
\mathbf{\hat H} = \min_{\mathbf H \in \mathbb{C}^{M \times N}} \mu {\left\| \mathbf H \right\|_{{\cal
                        A}_{\text{MMV}}}}\ + \frac{1}{2} \left\Vert \sqrt{P_{t}} \mathbf H\mathbf P -\mathbf{Y}  \right\Vert^{2}_{F},
\end{eqnarray}
where $\left\Vert \cdot\right\Vert_F$ denotes matrix Frobenius norm. Plugging \eqref{eq:AN_SDP_1D} into \eqref{2D_MMV_AN}, the size of the positive semidefinite matrix in the constraint is $\left(M+N\right)\times \left(M+N\right)$, resulting in considerably lower computational complexity and memory requirement than \eqref{Eq_2D_atomic_norm_sdp}.

\subsection{An Approximation to 4D Atomic Norm Minimization \label{lrmc_chest}}

Next we propose an approximation to the 4D atomic norm to reduce the computational complexity. In \cite{2D_AN_approx}, the authors explore the approximation of 2D atomic norm to improve the efficiency. Here, we extend the results from 2D atomic norm to 4D atomic norm case. Similar to the 2D MMV atomic norm, the proposed approximation is calculated with input $\mathbf H$. From \eqref{Eq_2_2}, $\mathbf H$ is the sum of $\sigma_{l} \mathbf b(\mathbf f_{l} )  \mathbf a(\mathbf g_{l} )^{H}$, in which both $\mathbf a(\mathbf g_{l})$ and $\mathbf b(\mathbf f_{l})$ are Fourier bases. Different from the vectorized atomic norm, we introduce the matrix atom $\mathbf Q\left( \mathbf f, \mathbf g \right) = \mathbf b(\mathbf f )  \mathbf a(\mathbf g )^{H}$ and the matrix atom set
\begin{eqnarray}
\label{eq:atoms_matrix_set_def}
{\mathcal{A}_{M}} = \left\{ \mathbf{Q}\left( \mathbf f, \mathbf g
\right) = \mathbf{b}(\mathbf f) \mathbf{a}(\mathbf g)^H : \mathbf f \in [-\frac{1}{2},\ \frac{1}{2}) \times [-\frac{1}{2},\
\frac{1}{2}), \mathbf g \in [-\frac{1}{2},\ \frac{1}{2}) \times [-\frac{1}{2},\
\frac{1}{2}) \right\}.
\end{eqnarray}
The matrix atomic norm is then given by
\begin{eqnarray}
\label{eq:atomic_matrix_norm_2D}
\left\| \mathbf H \right\|_{\mathcal A_{M}} = \inf_{\substack{
\mathbf f_{l} \in [-\frac{1}{2},\ \frac{1}{2}) \times [-\frac{1}{2},\ \frac{1}{2}),\\
\mathbf g_{l} \in [-\frac{1}{2},\ \frac{1}{2}) \times [-\frac{1}{2},\ \frac{1}{2}),\\
\sigma_{l} \in \mathbb{C}
  }} \left\{ \left. \sum\limits_{l} {\left\vert\sigma_{l}\right\vert}
\right| \mathbf H = \sum\limits_{l} \sigma_{l} \mathbf Q \left( \mathbf
f_{l}, \mathbf g_{l} \right)  \right\}.  
\end{eqnarray}

The matrix atom set is composed of rank-one matrices, and hence it amounts to atomic norm of low rank matrices. Since the operator $\text{vec}(\cdot)$ is a one-to-one mapping and the mapping ${\cal A}_M  \to {\cal A}$ is also one-to-one, it is straightforward to conclude that $\| \mathbf H \|_{{\cal A}_M} = \| \mathbf h \|_{\cal A}$. Hence, if the component frequencies satisfy the sufficient separation condition given by \eqref{freq_f_separation} and \eqref{freq_g_separation}, we have $\| \mathbf H \|_{{\cal A}_M} = \sum_{l} \left\vert\sigma_l\right\vert$ by Theorem \ref{Theorem1}.

Finding the harmonic components via atomic norm is an infinite programming problem over all feasible $\mathbf f$ and $\mathbf g$, which is difficult. For better efficiency, we use ${\rm SDP}(\mathbf H)$ in the following Lemma to approximate $\|\mathbf H \|_{{\cal A}_M}$.

\begin{lemma} \label{Lemma1}
        For $\mathbf H$ given by \eqref{Eq_2_2}, we have $\left\Vert \mathbf H\right\Vert_{\mathcal{A}_M} \geq \rm{SDP}(\mathbf H) \geq \left\Vert \mathbf H\right\Vert_{\mathcal{A}_{\rm{MMV}}}$,
        where
\begin{eqnarray} \label{eq:SDP_H}
 {\rm SDP}(\mathbf H) \triangleq \inf_{\mathbf U_{2} \in \mathbb{C}^{(2M_2-1) \times (2M_1-1)},
        \mathbf V_{2} \in \mathbb{C}^{(2N_2-1)\times (2N_1-1)}}  
\begin{Bmatrix}
\frac{1}{2M}{\rm Tr}\left( \mathbb T_2(\mathbf U_{2}) \right) + \frac{1}{2N}\rm{Tr}
\left(\mathbb T_2(\mathbf V_{2}) \right)        \\
\\ 
\text{s.t.}\ \left[ {\begin{array}{*{20}{c}}
        \mathbb T_2(\mathbf U_{2}) & \mathbf H \\
        \mathbf H^{H} & \mathbb T_2(\mathbf V_{2})
        \end{array}} \right] \succeq 0 
\end{Bmatrix},
\end{eqnarray}
with $\mathbb T_{2}(\mathbf U_{2})$ and $\mathbb T_{2}(\mathbf V_{2})$ being $2$-level Toeplitz matrices defined in \eqref{eq:Toeplitz2}. 
\end{lemma}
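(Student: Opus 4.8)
The plan is to treat $\mathrm{SDP}(\mathbf{H})$ as an intermediate semidefinite quantity wedged between the two atomic norms, and to prove each of the two inequalities entirely by feasibility arguments on the semidefinite programs \eqref{eq:AN_SDP_1D} and \eqref{eq:SDP_H}, rather than by manipulating the atomic norms directly. Conceptually, the lower bound holds because $\mathcal{A}_M \subseteq \mathcal{A}_{\mathrm{MMV}}$ (every structured rank-one atom $\mathbf{b}(\mathbf{f})\mathbf{a}(\mathbf{g})^H$ is also an MMV atom $\mathbf{b}(\mathbf{f})\bar{\mathbf{a}}^H$ with $\bar{\mathbf{a}}=\mathbf{a}(\mathbf{g})$ of unit norm), while the upper bound holds because any genuine atomic decomposition of $\mathbf{H}$ can be converted into a feasible point of \eqref{eq:SDP_H}. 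I would make both of these precise at the level of the SDPs.

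For the lower bound $\mathrm{SDP}(\mathbf{H}) \ge \|\mathbf{H}\|_{\mathcal{A}_{\mathrm{MMV}}}$, I would observe that \eqref{eq:SDP_H} is exactly a restriction of the MMV program \eqref{eq:AN_SDP_1D}. Both minimize an objective of the form $\tfrac{1}{2M}\mathrm{Tr}(\mathbb{T}_2(\mathbf{U}_2)) + \tfrac{1}{2N}\mathrm{Tr}(\cdot)$ subject to a positive-semidefinite block constraint with the same $(1,1)$ block $\mathbb{T}_2(\mathbf{U}_2)$ and the same off-diagonal block $\mathbf{H}$; the sole difference is that \eqref{eq:AN_SDP_1D} allows the $(2,2)$ block $\mathbf{X}$ to be any Hermitian matrix, whereas \eqref{eq:SDP_H} forces it into the two-level Toeplitz form $\mathbb{T}_2(\mathbf{V}_2)$. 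Since the semidefinite constraint in \eqref{eq:SDP_H} already makes $\mathbb{T}_2(\mathbf{V}_2)$ Hermitian, every feasible pair $(\mathbf{U}_2,\mathbf{V}_2)$ for \eqref{eq:SDP_H} produces a feasible pair $(\mathbf{U}_2,\mathbf{X}=\mathbb{T}_2(\mathbf{V}_2))$ for \eqref{eq:AN_SDP_1D} with identical objective. Thus the feasible set of \eqref{eq:SDP_H} embeds into that of \eqref{eq:AN_SDP_1D}, and minimizing over the smaller set can only raise the infimum, giving $\mathrm{SDP}(\mathbf{H}) \ge \|\mathbf{H}\|_{\mathcal{A}_{\mathrm{MMV}}}$.

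For the upper bound $\|\mathbf{H}\|_{\mathcal{A}_M} \ge \mathrm{SDP}(\mathbf{H})$, I would start from an arbitrary atomic decomposition $\mathbf{H} = \sum_l \sigma_l\,\mathbf{b}(\mathbf{f}_l)\mathbf{a}(\mathbf{g}_l)^H$ with $\sigma_l = |\sigma_l|e^{j\phi_l}$, and build a candidate point by setting $\mathbb{T}_2(\mathbf{U}_2) = \sum_l |\sigma_l|\,\mathbf{b}(\mathbf{f}_l)\mathbf{b}(\mathbf{f}_l)^H$ and $\mathbb{T}_2(\mathbf{V}_2) = \sum_l |\sigma_l|\,\mathbf{a}(\mathbf{g}_l)\mathbf{a}(\mathbf{g}_l)^H$. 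With the stacked vectors $\mathbf{z}_l = [\,\mathbf{b}(\mathbf{f}_l)^T\ \ e^{-j\phi_l}\mathbf{a}(\mathbf{g}_l)^T\,]^T$, the block matrix in \eqref{eq:SDP_H} equals $\sum_l |\sigma_l|\,\mathbf{z}_l\mathbf{z}_l^H$, which is a sum of rank-one positive-semidefinite terms and therefore PSD, and a one-line check shows that its off-diagonal block reproduces exactly $\sum_l \sigma_l\,\mathbf{b}(\mathbf{f}_l)\mathbf{a}(\mathbf{g}_l)^H = \mathbf{H}$. The step I expect to be the main obstacle, and the one I would treat most carefully, is verifying that $\mathbb{T}_2(\mathbf{U}_2)$ and $\mathbb{T}_2(\mathbf{V}_2)$ are honestly in the range of the operator $\mathbb{T}_2(\cdot)$, i.e.\ that they are valid two-level Toeplitz matrices: since $\mathbf{b}(\mathbf{f}_l) = \mathbf{c}_{M_1}(f_{l,1})\otimes\mathbf{c}_{M_2}(f_{l,2})$ by \eqref{Eq_2_3_b}, each outer product $\mathbf{b}(\mathbf{f}_l)\mathbf{b}(\mathbf{f}_l)^H$ factors as a Kronecker product of two single-level Toeplitz matrices and is hence two-level Toeplitz, a structure preserved under nonnegative combination; this is precisely the (constructive direction of the) Vandermonde decomposition of multilevel Toeplitz matrices \cite{Vandermonde_Decomposition_Multilevel_Toeplitz_Matrices}, so the candidate is genuinely feasible for \eqref{eq:SDP_H}. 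Finally, using $\|\mathbf{b}(\mathbf{f}_l)\|_2 = \|\mathbf{a}(\mathbf{g}_l)\|_2 = 1$ from \eqref{Eq_uniform_sin}, the objective evaluates to $\big(\tfrac{1}{2M}+\tfrac{1}{2N}\big)\sum_l |\sigma_l| \le \sum_l |\sigma_l|$ (one may even insert a scaling $\alpha$ into $\mathbf{z}_l$ with $\mathbb{T}_2(\mathbf{U}_2)\!\to\!\alpha\,\mathbb{T}_2(\mathbf{U}_2)$, $\mathbb{T}_2(\mathbf{V}_2)\!\to\!\alpha^{-1}\mathbb{T}_2(\mathbf{V}_2)$ to sharpen this to $\tfrac{1}{\sqrt{MN}}\sum_l|\sigma_l|$). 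Taking the infimum over all decompositions of $\mathbf{H}$ yields $\mathrm{SDP}(\mathbf{H}) \le \|\mathbf{H}\|_{\mathcal{A}_M}$, and combining this with the lower bound completes the proof.
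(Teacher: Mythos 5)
Your proof is correct and takes essentially the same route as the paper's: the lower bound by noting that \eqref{eq:SDP_H} is the restriction of \eqref{eq:AN_SDP_1D} to Toeplitz $(2,2)$ blocks, and the upper bound by exhibiting the rank-one positive-semidefinite certificate $\sum_l |\sigma_l|\,\mathbf z_l\mathbf z_l^{H}$ built from an arbitrary atomic decomposition (the paper's $\tilde{\mathbf a},\tilde{\mathbf b}$ with phases $\omega_l,\chi_l$ play exactly the role of your $\mathbf z_l$, and its explicit $\mathbf u_1(i),\mathbf v_1(i)$ formulas serve the same purpose as your Kronecker-product check that the blocks are genuinely two-level Toeplitz). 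The only divergence is the value of the objective at this feasible point --- under the paper's unit-norm convention for $\mathbf c_n$ you correctly obtain $\left(\tfrac{1}{2M}+\tfrac{1}{2N}\right)\sum_l|\sigma_l|$ whereas the paper asserts $\sum_l|\sigma_l|$, which presupposes unnormalized steering vectors --- but either value is at most $\sum_l|\sigma_l|$, so the claimed inequality follows either way.
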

\begin{proof}
The relation ${\rm SDP}\left( \mathbf H \right) \geq \left\Vert \mathbf H\right\Vert_{\mathcal{A}_{\text{MMV}}}$ can be directly obtained from the definitions in \eqref{eq:AN_SDP_1D} and \eqref{eq:SDP_H}. It remains to show $\left\Vert \mathbf H\right\Vert_{\mathcal{A}_M} \geq \rm{SDP}(\mathbf H)$. Denote
\begin{eqnarray}
\mathbf{\tilde a}(\mathbf g_{l}, \omega_{l}) &=& \frac{1}{\sqrt{N}}e^{j2
                \pi \omega_{l}} \mathbf c_{{N_1}}^*({g_{l,1}}) \otimes
            \mathbf c_{{N_2}}^*({g_{l,2}}),
        \nonumber \\
\mathbf{\tilde b}(\mathbf f_{l}, \chi_{l}) &=& \frac{1}{\sqrt{M}}e^{j2
                \pi \chi_{l}} \mathbf c_{{M_1}}({f_{l,1}}) \otimes \mathbf
        c_{{M_2}}({f_{l,2}}), \nonumber
        \end{eqnarray}
with $\omega_{l} \in [0, 2\pi]$ and $\chi_{l} \in [0, 2\pi]$ such that $\sigma_{l} = \left\vert\sigma_{l}  \right\vert e^{j2\pi \left(\omega_{l}+\chi_{l}\right)}$. For any $\mathbf H = \sum_{l} \sigma_{l} \mathbf{b}\left( \mathbf f_{l} \right)\mathbf{a}\left( \mathbf g_{l} \right)^{H}$, if we set 
\begin{eqnarray}
\mathbf U_2 &=& [\mathbf u_1(-M_1+1),\mathbf u_1(-M_1+2),...,\mathbf u_1(M_1-1)],\\
\mathbf V_2 &=& [\mathbf v_1(-N_1+1),\mathbf v_1(-N_1+2),...,\mathbf v_1(N_1-1)],
\end{eqnarray}
where
\begin{eqnarray}
\mathbf u_1(i) = \frac{1}{\sqrt M} \sum_l |\sigma_l| \mathbf{\tilde c}_{M_2}(f_{l,2}) e^{j 2 \pi (i-1) f_{l,1} }, \\
\mathbf v_1(i) = \frac{1}{\sqrt N} \sum_l |\sigma_l| \mathbf{\tilde c}_{N_2}^*(g_{l,2}) e^{-j 2 \pi (i-1) g_{l,1} },
\end{eqnarray}
with $\mathbf{\tilde c}_{n} (x) = \frac{1}{\sqrt{n}} \left[ e^{j2\pi (1-n) x}, e^{j2\pi (2-n) x}, \cdots, e^{j2\pi \left( n-1 \right) x } \right]^{T} \in \mathbb{C}^{2n \times 1}$, then the 2-level Toeplitz matrices $\mathbb T_2(\mathbf U_2)$ and $\mathbb T_2(\mathbf V_2)$ satisfy
\begin{eqnarray} \label{Eq_P1_5}
\mathbb T_2(\mathbf U_2) &=& \sum_{l} | \sigma_l | \mathbf{b}\left(\mathbf
f_{l} \right)\mathbf{b}\left(\mathbf f_{l} \right)^{H} \nonumber \\
&=& \sum_{l} | \sigma_l | \mathbf {\tilde b}\left(\mathbf f_{l},
\chi_{l} \right)\mathbf {\tilde b}\left(\mathbf f_{l}, \chi_{l}
\right)^{H},
\end{eqnarray}
\begin{eqnarray} \label{Eq_P1_6}
\mathbb T_2(\mathbf V_2) &=& \sum_{l} | \sigma_l | \mathbf {a}\left(\mathbf
g_{l} \right)\mathbf{a}\left( \mathbf g_{l} \right)^{H} \nonumber \\
&=& \sum_{l} | \sigma_l | \mathbf {\tilde a}\left(\mathbf g_{l},
\omega_{l} \right)\mathbf {\tilde a}\left(\mathbf g_{l}, \omega_{l}
\right)^{H}.
\end{eqnarray} 
Moreover, the matrix
        \begin{eqnarray} \label{Eq_P1_7}
        \mathbf M=\left[ {\begin{array}{*{20}{c}}
                \mathbb T_2(\mathbf U_2) & \mathbf H \\
                \mathbf H^{H} & \mathbb T_2(\mathbf V_2)
                \end{array}} \right] = \sum_{l} | \sigma_l | \begin{bmatrix}
        \mathbf {\tilde b}\left(\mathbf f_{l},\chi_{l} \right) 
        \\ \mathbf {\tilde a} \left(\mathbf g_{l},
        \omega_{l} \right)   \\ \end{bmatrix} 
        \begin{bmatrix}  \mathbf {\tilde b}\left(\mathbf f_{l}, \chi_{l}
        \right) 
        \\ \mathbf {\tilde a}\left(\mathbf g_{l}, \omega_{l} \right) 
        \\\end{bmatrix}^{H}
        \end{eqnarray}
        is positive semidefinite, indicating that the constraints in \eqref{eq:SDP_H} are satisfied. Note that $ {\rm SDP}(\mathbf H) \leq \frac{1}{2M}{\rm Tr} \left( \mathbb T_2(\mathbf U_2) \right) + \frac{1}{2N}{\rm Tr} \left( \mathbb T_2(\mathbf V_2) \right) = \sum_{l} | \sigma_l |$ according to the definition in \eqref{eq:SDP_H}. Since this holds for any decomposition of $\mathbf H$, we obtain ${\rm SDP}\left( \mathbf H \right) \leq \left\Vert \mathbf H\right\Vert_{\mathcal{A}_{M}}$. 
\end{proof}

The above lemma shows that ${\rm SDP}(\mathbf H)$ is a lower bound of the matrix atomic norm. Moreover, the following lemma states that if the component frequencies are sufficiently separated, then ${\rm SDP}(\mathbf H)$ is equivalent to $\left\Vert \mathbf H\right\Vert_{\mathcal{A}_{M}}$.
\begin{lemma} \label{Lemma2}
If \eqref{freq_f_separation}-\eqref{freq_g_separation} hold, then $\left\Vert \mathbf H\right\Vert_{\mathcal{A}_{M}} = \rm{SDP}(\mathbf H)$.
\end{lemma}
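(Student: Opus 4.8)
The plan is to reduce the claim to a single inequality by exploiting the sandwich already established in Lemma~\ref{Lemma1}, namely $\|\mathbf H\|_{\mathcal{A}_M} \geq \mathrm{SDP}(\mathbf H) \geq \|\mathbf H\|_{\mathcal{A}_{\mathrm{MMV}}}$. First I would observe that the top of this chain is already pinned down: the one-to-one correspondence argument preceding Lemma~\ref{Lemma1} together with Theorem~\ref{Theorem1} gives $\|\mathbf H\|_{\mathcal{A}_M} = \sum_l |\sigma_l|$ whenever \eqref{freq_f_separation}--\eqref{freq_g_separation} hold. At the bottom, since each $\mathbf a(\mathbf g_l)$ is already unit-norm, the physical factorization $\mathbf H = \sum_l \sigma_l \mathbf b(\mathbf f_l)\mathbf a(\mathbf g_l)^H$ is itself a feasible MMV representation with $\bar{\mathbf a}_l = \mathbf a(\mathbf g_l)$, so the infimum in \eqref{eq:atomic_norm_def} yields the trivial bound $\|\mathbf H\|_{\mathcal{A}_{\mathrm{MMV}}} \leq \sum_l |\sigma_l|$. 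Consequently the entire lemma collapses onto the single nontrivial inequality $\|\mathbf H\|_{\mathcal{A}_{\mathrm{MMV}}} \geq \sum_l |\sigma_l|$: once this is in hand, the chain $\sum_l |\sigma_l| = \|\mathbf H\|_{\mathcal{A}_M} \geq \mathrm{SDP}(\mathbf H) \geq \|\mathbf H\|_{\mathcal{A}_{\mathrm{MMV}}} \geq \sum_l |\sigma_l|$ forces equality throughout, and in particular $\mathrm{SDP}(\mathbf H) = \|\mathbf H\|_{\mathcal{A}_M}$.

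To prove $\|\mathbf H\|_{\mathcal{A}_{\mathrm{MMV}}} \geq \sum_l |\sigma_l|$ I would construct a dual certificate. The dual norm associated with $\mathcal{A}_{\mathrm{MMV}}$ is $\sup_{\mathbf f}\|\mathbf Q^H \mathbf b(\mathbf f)\|_2$, obtained by maximizing the real inner product $\langle \mathbf Q, \mathbf b(\mathbf f)\bar{\mathbf a}^H\rangle = \mathrm{Re}\,\mathrm{Tr}(\mathbf Q^H \mathbf b(\mathbf f)\bar{\mathbf a}^H)$ over unit-norm $\bar{\mathbf a}$. Hence it suffices to exhibit a matrix $\mathbf Q \in \mathbb{C}^{M\times N}$ whose associated vector-valued trigonometric polynomial $\mathbf q(\mathbf f) \triangleq \mathbf Q^H \mathbf b(\mathbf f) \in \mathbb{C}^N$ obeys the interpolation conditions $\mathbf q(\mathbf f_l) = e^{j\phi_l}\mathbf a(\mathbf g_l)$, where the unit phases are chosen so that $\sigma_l\, e^{j\phi_l} = |\sigma_l|$, together with the boundedness condition $\|\mathbf q(\mathbf f)\|_2 \leq 1$ for all $\mathbf f$, with strict inequality off $\{\mathbf f_l\}$. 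Such a $\mathbf Q$ satisfies $\langle \mathbf Q, \mathbf H\rangle = \mathrm{Re}\sum_l \sigma_l\, e^{j\phi_l}\,\mathbf a(\mathbf g_l)^H \mathbf a(\mathbf g_l) = \sum_l |\sigma_l|$ while having dual norm at most one, which certifies the desired lower bound.

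The certificate $\mathbf q(\mathbf f)$ is built exactly as in the proof of Theorem~\ref{Theorem1}, i.e., by the interpolation scheme of \cite{candes2014towards}, but now with vector-valued coefficients so as to match the target directions $\mathbf a(\mathbf g_l)$ at the nodes $\mathbf f_l$. Concretely, one takes a squared-Fej\'er-type 2D kernel $K(\mathbf f)$ in the receive-frequency variable and forms $\mathbf q(\mathbf f)$ as a combination of $K(\mathbf f - \mathbf f_l)$ and its first-order partial derivatives with coefficients in $\mathbb{C}^N$, chosen so that $\mathbf q(\mathbf f_l) = e^{j\phi_l}\mathbf a(\mathbf g_l)$ and the gradient of $\mathbf q$ vanishes at each node. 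The receive-frequency separation \eqref{freq_f_separation} guarantees, through the same localization and kernel-tail estimates used in the 1D argument, that the interpolation system is invertible and that $\|\mathbf q(\mathbf f)\|_2 < 1$ away from the support. The main obstacle is precisely this last norm bound: unlike the scalar polynomial of Theorem~\ref{Theorem1}, here $\mathbf q(\mathbf f)$ is $\mathbb{C}^N$-valued, so one must control its Euclidean norm rather than the modulus of a scalar, and the near-node expansion must verify that each fixed target direction $\mathbf a(\mathbf g_l)$ is approached with norm strictly below one off $\mathbf f_l$. This is where the 2D MMV dual-polynomial construction of \cite{Exact_Joint_Sparse} is invoked, its bounds carrying over once \eqref{freq_f_separation} is assumed; note that only the receive-side separation is needed for this half, while the transmit-side separation \eqref{freq_g_separation} enters solely through the previously established value $\|\mathbf H\|_{\mathcal{A}_M} = \sum_l |\sigma_l|$.
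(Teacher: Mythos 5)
Your proposal is correct and follows essentially the same route as the paper: squeeze $\mathrm{SDP}(\mathbf H)$ between $\|\mathbf H\|_{\mathcal{A}_M}$ and $\|\mathbf H\|_{\mathcal{A}_{\rm MMV}}$ via Lemma~\ref{Lemma1}, pin the top at $\sum_l|\sigma_l|$ by Theorem~\ref{Theorem1}, and pin the bottom at the same value using the MMV exactness result of \cite{Exact_Joint_Sparse}. The only difference is cosmetic: where the paper simply cites Theorem~4 of \cite{Exact_Joint_Sparse} for $\|\mathbf H\|_{\mathcal{A}_{\rm MMV}}=\sum_l|\sigma_l|$, you unfold that step into its dual-certificate sketch (and correctly note that only the receive-side separation \eqref{freq_f_separation} is needed there), but you still ultimately invoke the same reference for the kernel bounds.
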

\begin{proof}
First it follows from Theorem 4 in \cite{Exact_Joint_Sparse} that if \eqref{freq_f_separation}-\eqref{freq_g_separation} hold, then we have $\|\mathbf H \|_{{\cal A}_{\rm MMV}} = \sum_{l} \left\vert\sigma_l\right\vert$. Using Theorem 1 and the fact that $\|\mathbf h \|_{\cal A} = \|\mathbf H \|_{{\cal A}_M} $, we have $\|\mathbf H \|_{{\cal A}_M} = \|\mathbf H \|_{{\cal A}_{\text{MMV}}}$. Finally by Lemma \ref{Lemma1} we have $\|\mathbf H \|_{{\cal A}_M} = \|\mathbf H \|_{{\cal A}_{\text{MMV}}} = {\rm SDP}(\mathbf H)$.  
\end{proof}

When the sufficient separation condition given by \eqref{freq_f_separation} and \eqref{freq_g_separation}  is not satisfied, $\text{SDP}(\mathbf H)$ may not be the same as $\left\Vert \mathbf H\right\Vert_{\mathcal{A}_{M}}$. However, it is found via simulations that $\text{SDP}(\mathbf
H)$ still provides a good approximation to $\left\Vert \mathbf H\right\Vert_{\mathcal{A}_{M}}$ and usually results in good performance in channel estimation. Moreover, as  shown by Lemma \ref{Lemma1}, $\text{SDP}(\mathbf
H)$ is a lower bound of the atomic norm $\left\Vert \mathbf H\right\Vert_{\mathcal{A}_{M}}$ (or $\left\Vert \mathbf h\right\Vert_{\mathcal{A}}$ equivalently), i.e.,  $\left\Vert \mathbf h\right\Vert_{\mathcal{A}} = \left\Vert \mathbf H\right\Vert_{\mathcal{A}_M} \geq \text{SDP}(\mathbf H)$ in general.

\begin{figure*} 
        \centering
        \includegraphics[width=0.5\linewidth]{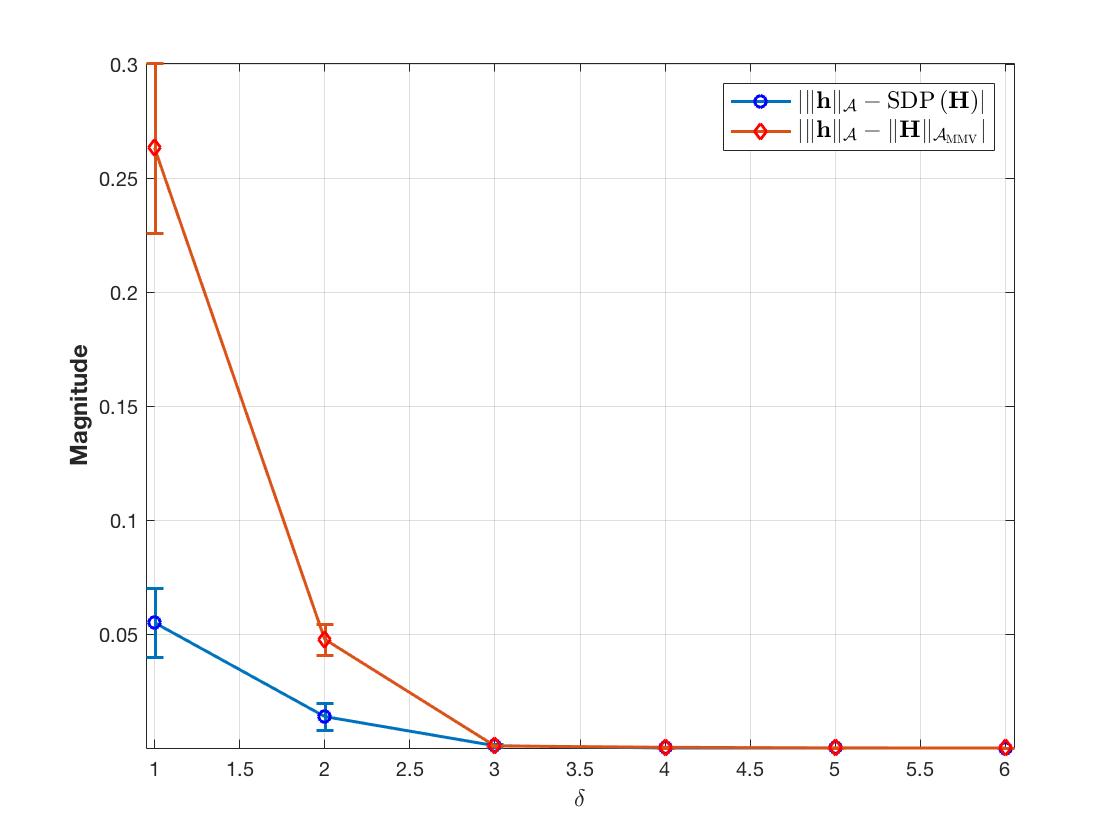}
        \caption{The approximation errors $\left\vert \left\Vert \mathbf H\right\Vert_{\mathcal{A}_M} - \text{SDP}(\mathbf H) \right\vert$ and $\left\vert \left\Vert \mathbf H\right\Vert_{\mathcal{A}_{M}} - \left\Vert \mathbf H\right\Vert_{\mathcal{A}_{\rm MMV}} \right\vert$ when the separations satisfy $\Delta_{\min,f_{i}} \geq \delta(M_i - 1)$,  $\Delta_{\min,g_{i}} \geq \delta/(N_i - 1)$, $N_{i} = M_{i}=16$, for $i=1,2$. The simulations are run 100 times for each $\delta$.} 
        \label{fig:Fig1}
\end{figure*}  

To show the approximation performances of both $\| \mathbf H \|_{{\cal A}_{\rm MMV}}$ and ${\rm SDP}(\mathbf H)$ to $\| \mathbf h \|_{\cal A}$, we perform a series of Monte Carlo trials for parameters $M_{1}=M_{2}=16$, $N_{1}=N_{2}=16$ with  $L=2$. $\mathbf f_{l}$ and  $\mathbf g_{l}$ take random values from $[-\frac{1}{2},\ \frac{1}{2}) \times [-\frac{1}{2},\ \frac{1}{2})$ such that the separations satisfy $\Delta_{\min,f_{i}} \geq \delta/\left\lfloor(N_i - 1)\right\rfloor$, $\Delta_{\min,g_{i}} \geq \delta / \left\lfloor(N_i - 1)\right\rfloor$ with $1 \leq \delta \leq 6$. In Fig. \ref{fig:Fig1}, we plot the approximation error against $\delta$ and the bars show $95\%$ confidence interval. As $\delta$ decreases, both approximation errors become larger. However, ${\rm SDP}(\mathbf H)$ provides a more accurate approximation than $\| \mathbf H \|_{{\cal A}_{\rm
MMV}}$. When $\delta \geq\ 4$, both approximation errors become zero.

Therefore, instead of solving the original 4D atomic norm minimization in \eqref{Eq_2D_atomic_norm_sdp}, we can solve the following SDP  
\begin{eqnarray} \label{Eq_3A_2}
\mathbf {\hat H} &=& \min_{\substack{ \mathbf H \in \mathbb{C}^{M \times N}, \\ \mathbf U_{2} \in \mathbb{C}^{(2M_2-1) \times (2M_1-1)}, \\
        \mathbf V_{2} \in \mathbb{C}^{(2N_2-1)\times (2N_1-1)} } } \frac{\mu}{2M}{\rm Tr}
\left( \mathbb T_2(\mathbf U_{2})  \right) + \frac{\mu}{2N}\rm{Tr} \left(
\mathbb T_2(\mathbf V_{2})  \right) + \frac{1}{2} \left\Vert \sqrt{P_{t}} \mathbf H \mathbf P - \mathbf{Y}\right\Vert^{2}_{F} \\ 
\text{s.t.} && \mathbf M = \left[ {\begin{array}{*{20}{c}}
\mathbb T_2(\mathbf U_{2}) & \mathbf H \\
\mathbf H^{H} & \mathbb T_2(\mathbf V_{2})
\end{array}} \right] \succeq 0. \nonumber
\end{eqnarray} 
The size of the positive semidefinite matrix in the constraint is $\left(M+N\right)\times \left(M+N\right)$, resulting in considerably lower computational complexity and memory requirement than \eqref{Eq_2D_atomic_norm_sdp}. 

\subsection{ADMM for Approximate 4D Atomic Norm Minimization}

To meet the requirement of real-time signal processing, we next derive an iterative algorithm for solving the SDP in \eqref{Eq_3A_2}, based on the alternating direction method of multipliers (ADMM) \cite{boyd2011distributed}. To put our problem in an appropriate form for ADMM, rewrite \eqref{Eq_3A_2} as
\begin{eqnarray}
\label{eq:sdp2}
\mathop {\arg \min }\limits_{\substack{ \mathbf H \in \mathbb{C}^{M \times
N}, \\ \mathbf U_{2} \in \mathbb{C}^{(2M_2-1) \times (2M_1-1)}, \\
        \mathbf V_{2} \in \mathbb{C}^{(2N_2-1)\times (2N_1-1)} }} && \frac{1}{2}{\left\|  \bf{H} \bf{P} - {\bf{Y}} \right\|_F^{2}} + \frac{\gamma}{2M} {\text{Tr}}\left( {\mathbb T}_2(\mathbf U_{2}) \right) + \frac{\gamma}{2N} {\text{Tr}}\left( {\mathbb T}_2(\mathbf V_{2}) \right) + \mathbb{I}_\infty(\bf M \succeq 0), 
\end{eqnarray}
where $\mathbb{I}_\infty(z)$ is an indicator function that is 0 if $z$ is true, and infinity otherwise. Dualize the equality constraint via an augmented Lagrangian, we have
\begin{eqnarray}
\label{eq:AL}
{\cal L}_\rho(\mathbf U_{2}, \mathbf V_{2},\bf{H},\bm{\Upsilon},\bf M) &=& \frac{\gamma}{2M} {\text{Tr}}\left( {\mathbb T}_2(\mathbf U_{2}) \right) + \frac{\gamma}{2N} {\text{Tr}}\left( {\mathbb T}_2(\mathbf V_{2}) \right) + \frac{1}{2} \|\mathbf H \mathbf P - \mathbf Y \|_{F}^{2} + \mathbb{I}_\infty(\bf M \succeq 0) \nonumber \\
&& + \left\langle {{\bm \Upsilon},\bf M  - \left[ {\begin{array}{*{20}{c}}
                {{\mathbb T}_2(\mathbf U_{2})}& \bf H\\
                {\mathbf H}^H & {\mathbb T}_2(\mathbf V_{2})
                \end{array}} \right]} \right\rangle \nonumber \\ 
&& + \frac{\rho}{2} \left \| {\bf M  - \left[ {\begin{array}{*{20}{c}}
                {\mathbb T}_2(\mathbf U_{2}) & \mathbf H\\
                {\mathbf H }^H & {\mathbb T}_2(\mathbf V_{2})
                \end{array}} \right] }\right\|_F^2,
\end{eqnarray}
where $\bm \Upsilon$ is the dual variable, $\langle \mathbf \Upsilon, \mathbf M \rangle \triangleq \text{Re} \left( \text{Tr}(\mathbf M^H \mathbf \Upsilon) \right)$, $\rho > 0$ is the penalty parameter. The ADMM consists of the following update steps:
\begin{eqnarray}
\label{eq:UPDA}
(\mathbf U^{l+1}_{2}, \mathbf V^{l+1}_{2},\mathbf{H}^{l+1}) &=& \arg \min_{\substack{ \mathbf H \in \mathbb{C}^{M \times N}, \\ \mathbf U_{2} \in \mathbb{C}^{(2M_2-1) \times (2M_1-1)}, \\
                \mathbf V_{2} \in \mathbb{C}^{(2N_2-1)\times (2N_1-1)} } }{\cal L}_{\rho}(\mathbf U_{2}, \mathbf V_{2}, \mathbf H, {\mathbf \Upsilon}^{l}, {\mathbf M }^{l} ),\\
\label{eq:UPDA2}
\mathbf M^{l+1} &=& \arg \min_{\mathbf M \in \mathbb{C}^{(M+N)\times(M+N)} \succeq 0}{\cal L}_{\rho}(\mathbf U_2^{l+1},\mathbf V_2^{l+1},\mathbf H^{l+1},\mathbf \Upsilon^{l},\mathbf M),\\
\label{eq:UPDAUpsilon}
\mathbf \Upsilon^{l+1} &=& \mathbf \Upsilon^{l} + \rho \left( {\mathbf M^{l+1}  - \left[ {\begin{array}{*{20}{c}}
                {{\mathbb T}_2(\mathbf U_2^{l+1})}& \mathbf H^{l+1}\\
                (\mathbf H^{l+1})^H&{{\mathbb T}_2(\mathbf V_2^{l+1})}
                \end{array}} \right] } \right).
\end{eqnarray}

Now we derive the updates of \eqref{eq:UPDA} and \eqref{eq:UPDA2} in detail. For convenience, the following partitions are introduced:
\begin{eqnarray}
\mathbf M^l = \left[ {\begin{array}{*{20}{c}}
        {\mathbf M_{0}^l}&{\mathbf M_{2}^l}\\
        {(\mathbf M_{2}^l)^H}&{\mathbf M}_1^l
        \end{array}} \right],
\end{eqnarray}
\begin{eqnarray}
\label{eq:Upsilon}
\bm \Upsilon^l = \left[ {\begin{array}{*{20}{c}}
        {\bm \Upsilon_0^l}&{\bm \Upsilon_2^l}\\
        {(\bm \Upsilon_2^l)^H}&{\bm \Upsilon_1^l}
        \end{array}} \right],
\end{eqnarray}
where $\mathbf M_0^l$ and $\bm \Upsilon_0^l$ are $M \times M$ matrices, $\mathbf M_2^l$ and $\bm \Upsilon_{2}^l$ are $M \times N$ matrices, $\mathbf M_1^l$ and $\bm \Upsilon_1^l$ are $N \times N$ matrices. Computing the derivative of ${\cal L}_\rho(\mathbf U_2, \mathbf V_2,\mathbf{H},\mathbf{\Upsilon},\mathbf M)$ with respect to $\mathbf H$, $\mathbf U_2$ and $\mathbf V_2$, we have
\begin{eqnarray}
\nabla_{\mathbf{H}} {\cal L}_{\rho} &=& (\mathbf{H}\mathbf{P} - \mathbf{Y})\mathbf{P}^H - 2 \mathbf \Upsilon_2^l + 2\rho (\mathbf{H} - \mathbf M_{2}^l),\\
\nabla_{U_{2}(i,k)}{\cal L}_{\rho} &=& \left\{ \begin{array}{l}
\frac{{{\gamma}}}{2} + M_1\rho U_2(i,k) - {\text{Tr}}(\rho \mathbf M_0^l + \mathbf \Upsilon_0^l),\ i = k = 0,\\
(M_1 - i)(M_2 - k)\rho U_2(i,k) - \sum\limits_{m=0}^{M_2-i-1}{\text{Tr}}_{k}\left({\cal S}_{i,k}^{(1)}(\rho \mathbf M_{0}^l + \mathbf \Upsilon _{0}^l)\right),\ i \ne 0 \text{ or } k \ne 0,
\end{array} \right. \\
\nabla_{V_{2}(i,k)}{\cal L}_{\rho} &=& \left\{ \begin{array}{l}
\frac{{{\gamma}}}{2} + N_1\rho V_2(i,k) - {\text{Tr}}(\rho \mathbf M_1^l + \mathbf \Upsilon_1^l),\ i = k = 0,\\
(N_1 - i)(N_2 - k)\rho V_2(i,k) - \sum\limits_{m=0}^{N_2-i-1}{\text{Tr}}_{k}\left({\cal S}_{i,k}^{(2)}(\rho \mathbf M_{1}^l + \mathbf \Upsilon _{1}^l)\right),\ i \ne 0 \text{ or } k \ne 0,
\end{array} \right. 
\end{eqnarray}
where $U_2(i,k)$ and $V_2(i,k)$ are the $(i,k)$-th elements of $\mathbf U_2$ and $\mathbf V_2$, respectively. For $\mathbf X \in \mathbb{C}^{M \times M}$, ${\cal S}_{i,k}^{(1)}(\mathbf X)$ returns the $(i,k)$-th $M_1 \times M_1$ submatrix $\mathbf X_{i,k}$. For $\mathbf X \in \mathbb{C}^{N \times N}$, ${\cal S}_{i,k}^{(2)}(\mathbf X)$ returns the $(i,k)$-th $N_1 \times N_1$ submatrix $\mathbf X_{i,k}$. $\text{Tr}_{k}(\cdot)$ outputs the trace of the $k$-th sub-diagnal of the input matrix. ${\text {Tr}}_{0}(\cdot)$ outputs the trace of the input matrix.

By setting the derivatives to 0, $\mathbf{H}^{l+1}$, $\mathbf U_2^{l+1}$ and $\mathbf V_2^{l+1}$ can be updated by:
\begin{eqnarray}
\label{eq:UPDAx1}
\mathbf{H}^{l+1} &=& (\mathbf{Y}\mathbf{P}^{H} + 2\rho \mathbf M_{2}^l + 2 \mathbf \Upsilon_{2}^l) (\mathbf{P} \mathbf{P}^H + 2\rho \mathbf{I}_N)^{-1},\\
\label{eq:UPDAu}
\mathbf U_2^{l+1} &=& {\mathbb T}_2^*(\mathbf M_{0}^l + \mathbf \Upsilon_{0}^l/\rho) - \frac{\gamma}{2 M \rho} \mathbf e_1 , \\
\label{eq:UPDAv}
\mathbf V_2^{l+1} &=& {\mathbb T}_2^*(\mathbf M_{1}^l + \mathbf \Upsilon_{1}^l/\rho) - \frac{\gamma}{2 N \rho} \mathbf e_1 ,
\end{eqnarray}
where $\mathbf e_1 = [1,0,0,...,0]^T$, ${\mathbb T}_2^*(\cdot)$ denotes the adjoints of the map ${\mathbb T}_2(\cdot)$. Specifically, suppose $\mathbf Z = {\mathbb T}_2^*(\mathbf X )$  where $\mathbf Z = [\mathbf z_{-M_2+1},\mathbf z_{-M_2+2},...,\mathbf z_{M_2-1}]$ with $\mathbf z_i=[z_i(-M_1+1),z_i(-M_1+2),...,z_i(M_1-1)]^T$ when $\mathbf
X \in \mathbb{C}^{M\times M}$. Then we have
\begin{eqnarray}
\label{eq:adjoint}
z_i (k) = \frac{1}{(M_1-i)(M_2-k)}\sum_{m=0}^{M_1-i-1} \text{Tr}_{k} ({\cal S}_{i,m}^{(1)}(\mathbf X)),
\end{eqnarray}
for $i= -M_2+1,-M_2+2,...,M_2-1$ and $k= -M_1+1,-M_1+2,...,M_1-1$.  

The update of $\mathbf M$ is given by
\begin{eqnarray}
\label{eq:UPDAPhi}
\mathbf M^{l+1} = \arg\min_{\mathbf M  \in \mathbb{C}^{(M+N)\times(M+N)}
 \succeq 0} \left \| \mathbf M  - \mathbf{\tilde M}^{l+1} \right\|_F^2,
\end{eqnarray}
where
\begin{eqnarray}
\mathbf{\tilde M}^{l+1} = \left[ {\begin{array}{*{20}{c}}
        {{\mathbb T}_2(\mathbf U_2^{l+1})}& \mathbf{H}^{l+1}\\
        (\mathbf{H}^{l+1})^H&{{\mathbb T}_2(\mathbf V_2^{l+1})}
        \end{array}} \right] - \mathbf \Upsilon^{l+1}/\rho.
\end{eqnarray}
It is equivalent to projecting $\mathbf{\tilde M}^{l+1}$ onto the positive semidefinite cone. Specifically, the projection is accomplished by setting all negative eigenvalues of $\mathbf{\tilde M}^{l+1}$ to zero. Note that in ADMM the update of variables $\mathbf H$, $\mathbf U_2$, $\mathbf V_2$ and $\mathbf M$ are in closed-form. Compared to the off-the-shelf solvers such as SeDuMi \cite{sturm1999using} and SDPT3 \cite{toh1999sdpt3}, whose computational complexity is ${\mathcal O}\left( (M+N)^6 \right)$ in each iteration, the complexity of ADMM is ${\mathcal O}\left( (M+N)^3 \right)$ in each iteration, so it runs much faster. 

\section{The General Planar Array Case \label{Section_NUPA_chest}}
So far we have focused on the uniform planar array (UPA).  For mmWave beamformed FD-MIMO, because of the larger average inter-antenna element spacing, non-uniform planar array (NUPA) requires fewer elements than  UPA, whereby reducing the weight and cost of the system in large array applications. Also, the irregular spacing allows the antenna grid spacing to become larger than a half wavelength so it can effectively reduce the channel correlation and enhance multiplexing gain \cite{NUPA_mmWaveMIMOLinks}. Furthermore, there is a fundamental limitation of UPA, namely, the lower resolution of elevation AoA, which limits the UPA performance \cite{structured_non_unifromly_spaced_antenna_array}. 

In this section we consider the beamformed mmWave FD-MIMO channel estimation for NUPA. Define $\mathbf d_{t}= \frac{2}{\lambda} \left[ \left(d_{t,1}(1), d_{t,2}(1)\right) \ldots \left(d_{t,1}(N), d_{t,2}(N)\right) \right]$ as the normalized transmit antenna locations, where $\left( {d_{t,1}\left( i \right)}, {d_{t,2}\left( i \right)}\right)$ is the $i$-th transmit  antenna coordinate in a 2D planar surface. Similarly, $\mathbf d_{r}= \frac{2}{\lambda}\left[\left(d_{r,1}\left(1\right), d_{r,2}\left(1\right)\right) \ldots \left(d_{r,1}\left(M\right), d_{r,2}\left(M\right)\right)\right]$ is the normalized receive antenna locations where $\left( {d_{r,1}\left( i \right)}, {d_{r,2}\left( i \right)}\right)$ is the $i$-th receive  antenna coordinate in a 2D planar surface.  Then the steering responses of the transmit and receive arrays for the $l$-th path can be respectively written as \cite{nai2010beampattern}  
\begin{eqnarray} \label{Eq_non_uniform_sin}
\mathbf {a}_{\mathbf d_{t}}\left( \mathbf g_{l} \right) &=& \frac{1}{\sqrt{N}}
\left[ e^{j{2\pi} \left(\frac{ 2d_{t,1} (1) }{\lambda} g_{l,1} + \frac{2d_{t,2}( 1)}{\lambda}g_{l,2} \right) }  \cdots e^{j{2\pi} \left(\frac{2d_{t,1} (N)}{\lambda}g_{l,1} + \frac{2d_{t,2} (N)}{\lambda}g_{l,2}  \right)}   \right]^{T}, \\
\label{Eq_non_uniform_cos}
\mathbf {b}_{\mathbf d_{r}}\left( \mathbf f_{l} \right) &=& \frac{1}{\sqrt{M}}
\left[ e^{j{2\pi} \left(\frac{ 2d_{r,1} (1) }{\lambda} f_{l,1} + \frac{2d_{r,2}(
                1)}{\lambda}f_{l,2} \right) }  \cdots e^{j{2\pi} \left(\frac{2d_{r,1}(M)}{\lambda}f_{l,1} + \frac{2d_{r,2} (M)}{\lambda}f_{l,2}  \right)}   \right]^{T} .
\end{eqnarray}
With \eqref{Eq_non_uniform_sin} and \eqref{Eq_non_uniform_cos}, the channel matrix $\mathbf H  $ of NUPA is given by \eqref{Eq_2_2} with array responses $\mathbf a\left( \mathbf g_{l} \right)$  and $\mathbf b \left( \mathbf f_{l} \right)$  replaced by $\mathbf {a}_{\mathbf d_{t}}\left( \mathbf g_{l} \right)$ and $\mathbf {b}_{\mathbf d_{r}}\left( \mathbf f_{l} \right)$, respectively. 

The atom for NUPA is then defined as 
\begin{eqnarray} \label{atom4D_definition_var}
\mathbf{q}_{\rm NU}\left( \mathbf{ g}, \mathbf{ f} \right) = \mathbf{a}_{\mathbf d_{t}}^{*}\left(\mathbf{
        g}\right) \otimes \mathbf{ b}_{\mathbf d_{r}}\left(\mathbf{ f}\right).
\end{eqnarray}
And  the atom set for NUPA is given by
\begin{eqnarray}
\label{eq:atom_dic}
\mathcal{A}_{\text{NU}} \triangleq \left\{ \mathbf{q}_{\rm NU} \left( \mathbf{ g}, \mathbf{ f}  \right), \mathbf{ g} \in [\frac{-1}{2},\ \frac{1}{2}) \times [-\frac{1}{2},\ \frac{1}{2}),\ \mathbf{ f} \in [\frac{-1}{2},\ \frac{1}{2}) \times [\frac{-1}{2},\ \frac{1}{2})  \right\}.
\end{eqnarray}
The  atomic norm $ \left\| \mathbf h \right\|_{\cal A_{\text{NU}}}$  for any $\mathbf h = \text{vec}\left(\mathbf H \right)  $  is then given by 
\begin{eqnarray}
\label{eq:atomnorm_nupa_2D}
\left\| \mathbf h \right\|_{\cal A_{\rm NU}} = \inf_{\substack{
                \mathbf f_{l} \in [-\frac{1}{2},\ \frac{1}{2}) \times [-\frac{1}{2},\ \frac{1}{2}),\\ \mathbf g_{l} \in [-\frac{1}{2},\ \frac{1}{2}) \times [-\frac{1}{2},\ \frac{1}{2}),\\ \sigma_{l} \in \mathbb{C}
}} \left\{ \left. \sum\limits_{l} {\left\vert\sigma_{l}\right\vert}
\right| \mathbf h = \sum\limits_{l} \sigma_{l} \mathbf q_{\rm NU} \left( \mathbf g_{l}, \mathbf f_{l} \right)  \right\}.  
\end{eqnarray}

To estimate the channel, we propose to solve the following optimization problem 
\begin{eqnarray}\label{object_function}
\min_{\mathbf h} \mu\left\Vert \mathbf h \right\Vert_{\mathcal{A}_{\text{NU}}}
+ \frac{1}{2} \left\Vert \mathbf{y} - \sqrt{P_t} \left(\mathbf P^{T} \otimes \mathbf I_{M}\right)
\mathbf h \right\Vert_{2}^2.
\end{eqnarray}
Note that the atom defined in \eqref{atom4D_definition_var} is not based on uniform sampling, and consequently the atomic norm in \eqref{eq:atomnorm_nupa_2D} does not have the equivalent SDP form as in \eqref{Eq_2D_atomic_norm_sdp} or \eqref{Eq_3A_2}. Hence,  \eqref{object_function} cannot be solved via convex optimization. According to Corollary 2.1 of \cite{li2016approximate}, \eqref{object_function} shares the same optimum as the following optimization problem
\begin{eqnarray}
\label{eq:p2}
\min_{\substack{
		\mathbf f_{l} \in [-\frac{1}{2},\ \frac{1}{2}) \times [-\frac{1}{2},\ \frac{1}{2}),\\ \mathbf g_{l} \in [-\frac{1}{2},\ \frac{1}{2}) \times [-\frac{1}{2},\ \frac{1}{2}),\\ \sigma_{l} \in \mathbb{C}
	}} \Gamma\left( \{ \mathbf g_{l}, \mathbf f_{l},  \sigma_{l} \} \right) = \mu\left\Vert \bm{\sigma} \right\Vert_{1}
+ \frac{1}{2} \left\Vert \mathbf{y} - \sqrt{P_t} \left(\mathbf P^{T} \otimes \mathbf I_{M}\right) \sum_{l=1}^{L}\mathbf q_{\rm NU}\left( \mathbf g_{l}, \mathbf f_{l} \right){\sigma_{l}} \right\Vert^{2}_{2}.
\end{eqnarray}
Since the problem given by \eqref{eq:p2} is nonconvex, we will employ  a gradient-descent algorithm to obtain its local optimum. In practice, $L$ is unknown, so we initialize $\mathbf q \left(\mathbf{g}_{l}, \mathbf{f}_{l}\right)$ on $\tilde L^{0}$ grid points such that $L \leq \tilde L^{0} \leq MP$, where $P$ is the number of training beams defined in \eqref{Eq_2_4}. For example, let each $\mathbf g_{l}$ and $\mathbf f_{l}$ be taken from a uniform grid of $N_{G}$ points with $\tilde L^{0}= N_{G}^4 \leq MP$, i.e., $g_{l,i}^{0}$ and $f_{l,i}^{0}$ are uniformly taken from $[-1/2,1/2)$ for $i = 1,2$ and $1 \leq l \leq N_G^4$, where the supercript $^0$ indicates iteration $0$, i.e.,  initialization. Let $\mathbf \Omega^{0} = \left\{ (\mathbf g_{l}^{0}, \mathbf f_{l}^{0})_{1 \leq l \leq {\tilde L}} \right\}$. The   initial value of $\bm \sigma^{0} = \left[ \sigma_{1}^{0} \ldots \sigma_{\tilde L}^{0}\right]^{T}$ can then be obtained by the least-squares (LS) estimate 
\begin{eqnarray}\label{Eq_LS}
\bm {\sigma}^{0} = \left( \left(\mathbf
P^{T} \otimes \mathbf I_{M}\right) \left[ \mathbf{q}_{\rm NU}\left( \mathbf{
        g}^{0}_{1}, \mathbf{f}^{0}_{1} \right) \ldots \mathbf{q}_{\rm NU}\left( \mathbf{
        g}^{0}_{\tilde L}, \mathbf{ f}^{0}_{\tilde L} \right) \right] \right)^{\dagger}\mathbf
y,
\end{eqnarray}
where $^{\dagger}$ indicates the pseudo inverse of the matrix. Then the gradient descent method is used to find the local optimum. We use superscript $k$ to denote the quantities in the  $k$-th iteration. Then the gradient descent search proceeds as follows 
\begin{eqnarray} \label{subgradient_g}
g_{l,i}^{k+1} &=&  \left[g_{l,i}^{k} - \kappa^k \nabla_{g_{l,i}} \Gamma\left( \{\mathbf g_{l}^k, \mathbf f_{l}^k, \sigma_{l}^k \} \right) \right]^{\frac{1}{2}}_{-\frac{1}{2}}, \\ \label{subgradient_f}
f_{l,i}^{k+1} &=&  \left[f_{l,i}^{k} - \kappa^k \nabla_{f_{l,i}} \Gamma\left( \{ \mathbf g_{l}^k, \mathbf f_{l}^k, \sigma_{l}^k \} \right) \right]^{\frac{1}{2}}_{-\frac{1}{2}}, \\ \label{subgradient_sigma}
\sigma_{l}^{k+1} &=&  \sigma_{l}^{k} - \kappa^k \nabla_{\sigma_{l}} \Gamma\left( \{ \mathbf g_{l}^k, \mathbf f_{l}^k, \sigma_{l}^k \} \right), 
\end{eqnarray}
for $l=1,\ldots, \tilde L^{k}$ and $i=1, 2$, where $\kappa^k$ is the step size that can be obtained via Armijo line search \cite{boumal2015low} and $\left[ x \right]^{a}_{b}$ defines the operator that outputs $x = \mod\left(x, a \right) $ when  $x < b$, and outputs $x = \mod\left(x, b \right) $ when  $x > a$, $\mod(a, b)$ defines the modulo operator. Specifically, in the $k$-th iteration, $\kappa^k$ is initialized as $\kappa^k = \bar \kappa$. If $\Gamma\left( \{ \mathbf g_{l}^{k+1}, \mathbf f_{l}^{k+1},  \sigma_{l}^{k+1} \} \right) \geq \Gamma\left(  \{ \mathbf g_{l}^{k}, \mathbf f_{l}^{k},  \sigma_{l}^{k} \} \right)$, then $\kappa^k$ is updated by multiplication with a constant $0 < \alpha < 1$, i.e., $\kappa^k \leftarrow \alpha \kappa^k$. The gradients are calculated
\begin{eqnarray}
\nabla_{g_{l,i}} \Gamma\left( \{ \mathbf g_{l}, \mathbf f_{l}, \sigma_{l} \} \right) &=& \mathcal{R}\left\{ \sigma_{l} \left(\mathbf{\bar P}\sum_{l=1}^{\tilde L} \mathbf {q}_{\rm NU}\left( \mathbf{g}_{l},
\mathbf {f}_{l}\right)\sigma_{l} - \mathbf{y}\right)^{H} \mathbf{\bar P} \left(\mathbf a^{*}_{\mathbf d_{t,i}}\left(\mathbf{g}_{l}\right) \otimes \mathbf{b}_{\mathbf d_{r}}\left(\mathbf{f}_{l}\right)\right)   \right\}, \label{eq:subgradient_g} \\
\nabla_{f_{l,i}} \Gamma\left( \{ \mathbf g_{l}, \mathbf f_{l}, \sigma_{l} \} \right) &=& \mathcal{R}\left\{\sigma_{l} \left(\mathbf{\bar P} \sum_{l=1}^{\tilde L} \mathbf {q}_{\rm NU}\left( \mathbf{g}_{l},
\mathbf {f}_{l}\right)\sigma_{l} - \mathbf{y}\right)^{H} \mathbf{\bar P}\left( \mathbf a^{*}_{\mathbf d_{t}}\left(\mathbf{g}_{l}\right)
\otimes \mathbf{b}_{\mathbf d_{r,i}}\left(\mathbf{f}_{l}\right) \right) \right\}, \label{eq:subgradient_f} \\
\nabla_{\sigma_{l}} \Gamma\left( \{ \mathbf g_{l}, \mathbf f_{l}, \sigma_{l} \} \right) &=& \mu\frac{\sigma_l}{2|\sigma_l| } + \frac{1}{2} \left(\mathbf{\bar P} \sum_{l=1}^{\tilde L} \mathbf {q}_{\rm NU}\left( \mathbf{g}_{l},
\mathbf {f}_{l}\right)\sigma_{l} - \mathbf{y}\right)^T\left( \mathbf{\bar P}\mathbf q_{\rm NU}\left( \mathbf g_{l}, \mathbf f_{l} \right)  \right)^{*}, \label{eq:subgradient_sigma}
\end{eqnarray}
where $\mathcal R \left\{ \cdot \right\}$ returns the real part of the input, 
\begin{eqnarray}
\mathbf {\bar P} &=& \sqrt{P_{t}}\left(\mathbf P^{T} \otimes \mathbf I_{M}\right), \\
\mathbf a_{\mathbf d_{t,i}}\left(\mathbf{g}_{l}\right) &=& \left( \frac{j{2\pi}}{\lambda} \left[ d_{t,i}(1), \ldots, d_{t,i}(N) \right]^{T}\right) \circ \mathbf{a}_{\mathbf d_{t}}\left(\mathbf{g}_{l}\right), \\
\mathbf b_{\mathbf d_{r,i}}\left(\mathbf{f}_{l}\right) &=& \left( \frac{j{2\pi}}{\lambda}
\left[ d_{r,i}(1), \ldots, d_{r,i}(M) \right]^{T}\right) \circ \mathbf{b}_{\mathbf
	d_{r}}\left(\mathbf{f}_{l}\right),
\end{eqnarray}
and $\circ$ denotes Hadamard product.  The derivations of \eqref{eq:subgradient_g} - \eqref{eq:subgradient_sigma} are given in the Appendix. 
To accelerate the convergence, we introduce a pruning step to remove the atoms whose coefficients are smaller than a threshold during each iteration. Specifically, at the $k$-th iteration, if $|\sigma_l^k|<\eta^{k}$ where $\eta^{k}$ is a given threshold at the $k$-th iteration, then $l$-th path are removed from the set and number of estimated paths is decreased by one, i.e., $\mathbf \Omega^{k} \leftarrow \mathbf \Omega^{k} \setminus \left\{\left(\mathbf{g}^{k}_{l}, \mathbf{f}^{k}_{l} \right)\right\}$ and $\tilde L^{k} \leftarrow \tilde L^{k} -1$ at the $k$-th iteration. The algorithm stops when $ \left\| \mathbf{h}^{k+1} - \mathbf{h}^{k}\right\|  < \varepsilon$, where $\mathbf{h}^{k}=\sum_{l=1}^{\tilde L^{k}} \mathbf q_{\rm NU}\left( \mathbf{g}^{k}_{l}, \mathbf{f}^{k}_{l} \right){\sigma^{k}_{l}} $ denotes the channel estimation at the $k$-th iteration.

\section{Simulation Results\label{Section_Simulation}}
\subsection{Simulation Setup} 
In this section, we  evaluate the performance of the proposed channel estimators  for mmWave FD-MIMO links with UPA or NUPA. We compare the channel estimation performance of the proposed algorithm with some existing algorithms including the 4D-MUSIC \cite{2D_MUSIC_mmWave} and the orthogonal matching pursuit (OMP) \cite{OMP}.  The simulation parameters are set as follows.
\begin{enumerate} [1,]
        \item The numbers of transmit and receive antenna are $N=16$ and $M=16$, respectively. For UPA,  we set $N_{1}=4$, $N_{2}=4$, $M_{1}=4$ and $M_{2}=4$.
        \item In the UPA case, the DFT codebooks at the transmitter for elevation and azimuth  are given by 
        \begin{eqnarray}
        \mathbf P_{1} = \left[ \mathbf c_{N_{1}}\left( \psi_{1,0} \right)\ \mathbf c_{N_{1}}\left( \psi_{1,1} \right) \cdots \mathbf c_{N_{1}}\left( \psi_{1,P_{1}-1} \right) \right] \in \mathbb{C}^{N_{1} \times P_{1} },  \nonumber \\
        \mathbf P_{2} = \left[ \mathbf c_{N_{2}}\left( \psi_{2,0} \right)\  \mathbf c_{N_{2}}\left( \psi_{2,1} \right) \cdots \mathbf c_{N_{2}}\left( \psi_{2, P_{2}-1} \right) \right] \in \mathbb{C}^{N_{2} \times P_{2}}, \nonumber
        \end{eqnarray}
        where $P_{1}$ and $P_{2}$ are the sizes of elevation and azimuth codebooks, respectively. The DFT angles are $\psi_{1,i}=  \frac{i }{P_{1}}$ for $i=0, \ldots, P_{1}-1$ and $\psi_{2,i}= \frac{i}{P_{2}}$ for $i=0, \ldots, P_{2}-1$. We take the Kronecker product of  $\mathbf P_{1}$ and $\mathbf P_{2}$ to form the product codebook $\mathbf P= \mathbf P_{1} \otimes \mathbf P_{2}$ with size $P=P_{1}P_{2}$. Each  beamforming vector has a unit norm, i.e., $\left\Vert \mathbf p_{p}\right\Vert = 1$ for $p=1,\ldots,P$ and $\text{rank} \left(\mathbf P\right) = P$.  
        
        \item The weight factor in \eqref{Eq_2D_atomic_norm_sdp} and \eqref{Eq_3A_2} is set as $\mu=\sigma_{w}\sqrt{MN\log\left( MN \right)}$. The weight for the augmented Lagrangian  in \eqref{eq:AL} is set as $\rho=0.05$.    
        
        \item  $\mathbf g_{l}$ and $\mathbf f_{l}$ for each path are assumed to uniformly take values in $[-\frac{1}{2}, \frac{1}{2}) \times [-\frac{1}{2}, \frac{1}{2})$. The number of paths $L=3$. 
        \item The signal power is controlled by the signal-to-noise ratio (SNR) which is defined as $\text{SNR}=\frac{P_{t}}{\sigma^{2}_{w}} $ with $\sigma^{2}_{w} = 1$. 
        \item \label{sim:itemNUPA} For NUPA, we use circular arrays for both transmitter and receiver with $N$ and $M$ antenna elements located on the 2D plane,  respectively. Specifically,  the $n$-th transmit antenna location is set as $d_{t,1}(n)=R_{t}\cos\left( \chi_{n} \right), d_{t,2}(n)=R_{t}\sin\left( \chi_{n} \right),\ n=1,2, \ldots, N$, where $\chi_{n}=2\pi\left({n\over N}\right)$ is the angular position of the $n$-th element and  $R_{t}$ is the radius of the transmit array. Similarly, the $m$-th receive antenna location is $d_{r,1}(m)=R_{r}\cos\left( \chi_{m} \right), d_{r,2}(m)=R_{r}\sin\left( \chi_{m} \right),\ m=1,2, \ldots, M$, where $\chi_{m}=2\pi\left({m\over M}\right)$ is the angular position of the $m$-th element and $R_{r}$ is the radius of the receive array.     
        \item For the gradient descent algorithm, we set $\tilde L^0 = MP$ as the initial value in both UPA and NUPA cases. The pruning threshold in the $k$-th step is set as  $\eta^{k}= 0.7 \max_{1 \leq l \leq \tilde L^k}\left\{ \sigma^{k}_{l}\right\}  $.
        \item For the OMP and 4D-MUSIC algorithms, the AoD and AoA grid points are set as $\bar \vartheta_{i} = \frac{\left( i-1 \right)2\pi}{N_{G}}-\pi$, $\bar \varphi_{i}= \frac{\left( i-1 \right)2\pi}{N_{G}}-\pi$ and $\bar \theta_{i} = \frac{\left( i-1 \right)2\pi}{N_{G}}-\pi$, $\bar \phi_{i} =\frac{\left( i-1 \right)2\pi}{N_{G}}-\pi$, respectively, for $i=1, \ldots N_{G}$. 
        \item In the simulation, we use the CVX package \cite{cvx} to compute the 4D atomic norm-based estimator.
\end{enumerate}

\subsection{Performance Evaluation}   
%
We use the normalized mean square error (NMSE), i.e., $\text{NMSE} = \mathbb{E} \left\{\frac{\left\Vert \mathbf{\hat H} - \mathbf H\right\Vert^{2}_{F}}{\left\Vert\mathbf H\right\Vert^{2}_{F}}\right\}$ as the channel estimation performance metric. The NMSE statistics across different SNRs with different test setups are evaluated. Each curve is obtained by averaging over 100 realizations. First we compare the channel estimation performance of different algorithms under the UPA setting. Then we show the channel estimation performance for NUPA with the proposed gradient descent estimator and compare it with the 4D-MUSIC and OMP algorithms. 

The computational complexity of the proposed   approximate 4D atomic-norm-based channel estimator  is $\mathcal{O} (\left(M+N \right)^3)$ per-iteration. The computational complexity of the MUSIC estimator is $\mathcal{O}\left( \left(NM\right)^{3} + N^{4}_{G}\left(NM\right)^2\right)$ where $\mathcal{O}\left(\left(NM\right)^{3}\right)$ is for eigen decomposition and $\mathcal{O}\left(N^{4}_{G}\left(NM\right)^2\right)$ is for grid search. The complexity of the OMP estimator is $\mathcal{O}\left(N^{4}_{G}\left(NM\right)^2\right)$ per iteration. The complexity of proposed gradient descent estimator is $\mathcal{O}\left( M \left( N+P\right) \right) $ per iteration. 

\subsubsection{Convergence Behavior of the Proposed Channel Estimators}
We illustrate the convergence of the proposed   ADMM implementation of the approximate 4D atomic-norm-based
channel estimator through a simulation example. We compare the NMSE of the  ADMM channel estimator with that of the CVX solver \cite{cvx} that directly solves \eqref{Eq_3A_2}.  As can be seen from Fig. \ref{figADMM}, the proposed ADMM channel estimator converges to the solution given by the CVX after $300$-$400$ iterations for different SNR. It is worth noting that the ADMM runs much faster than the CVX solver because the calculation in each iteration is in closed-form. 
\begin{figure}[ht]
        \centering
        \subfigure[SNR$=4$dB]{\includegraphics[width=0.4\linewidth]{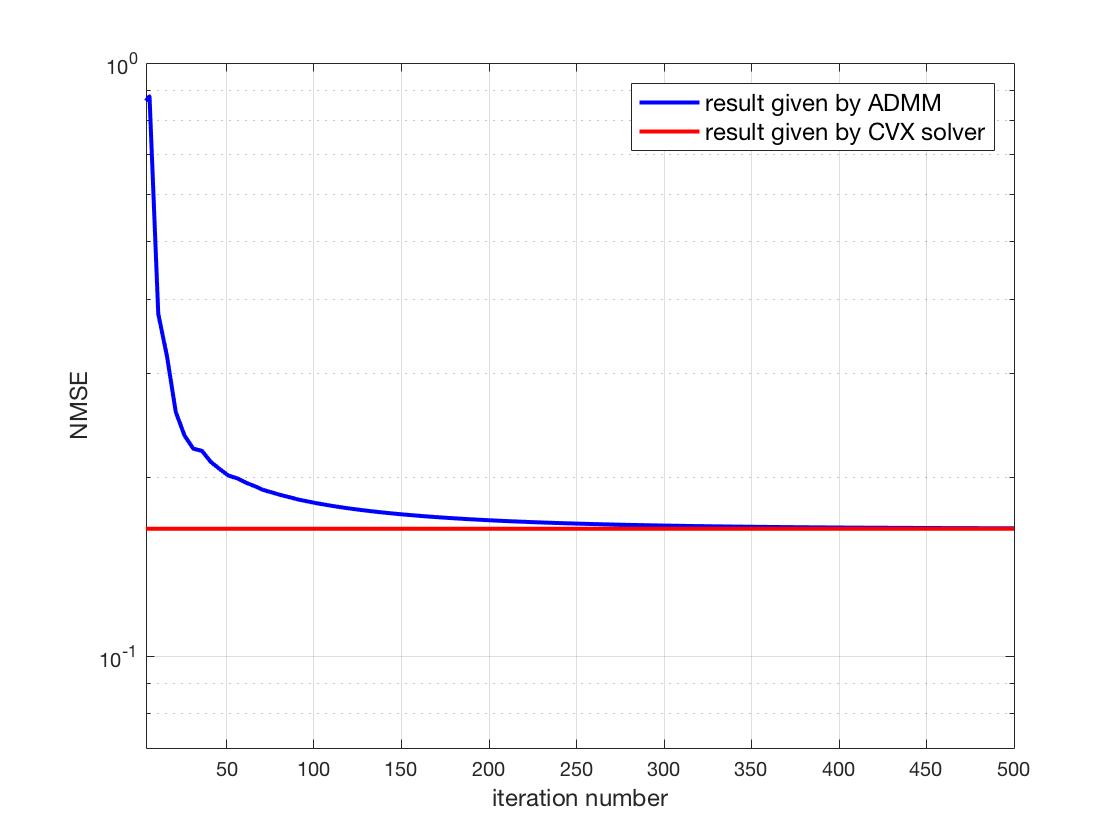}}\label{figADMM:subfiga}
          \subfigure[SNR$=10$dB]{\includegraphics[width=0.4\linewidth]{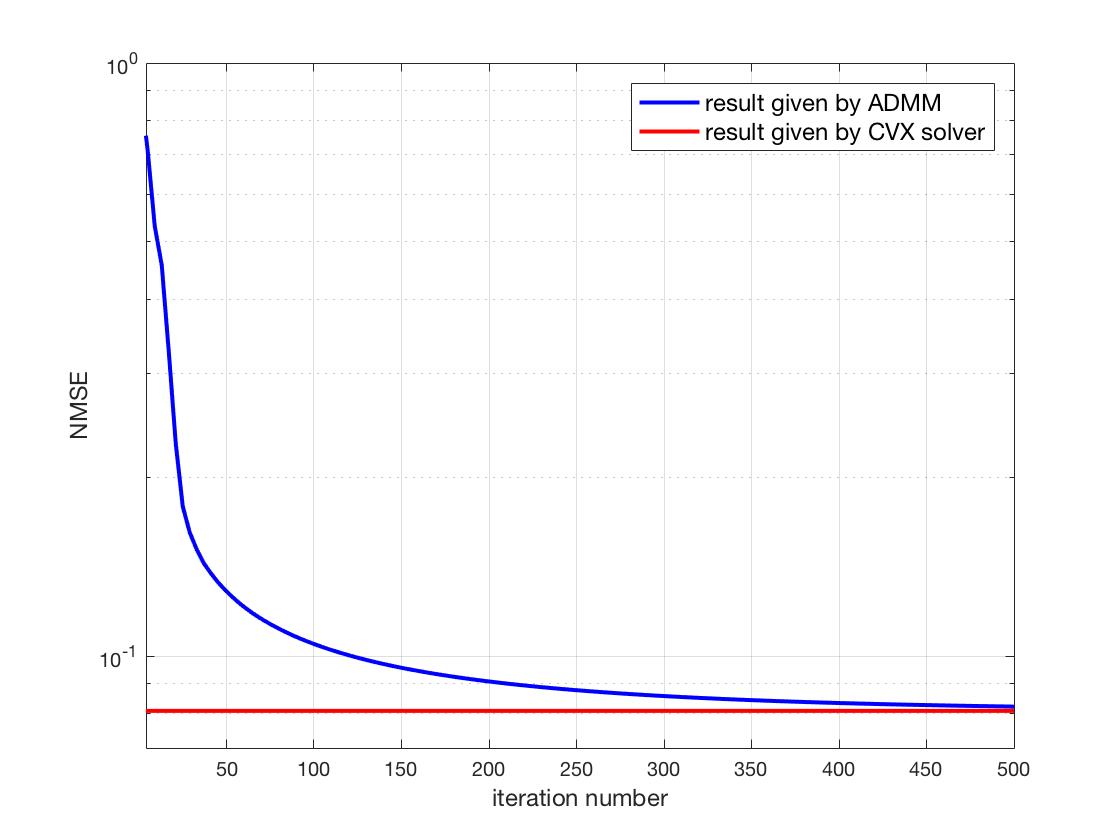}}\label{figADMM:subfiga}
        \caption{Convergence of proposed ADMM channel estimator with different SNR.} \label{figADMM}
\end{figure} 
We then show the convergence behavior and the number of estimated paths of the proposed gradient-descent-based channel estimator for UPA and NUPA in Fig. 3. It is seen that the algorithm converges within $1500$-$2000$ iterations for different SNR. The estimated number of paths is more accurate at higher SNR  when the algorithm converges, as more spurious frequencies arise when the noise is stronger. It is also worth noting that the computational complexity of the gradient descent method is lower than that of the ADMM, but the overall running time is higher because it takes more iterations. 
\begin{figure}[ht]
        \centering
        \subfigure[]{\includegraphics[width=0.35\linewidth]{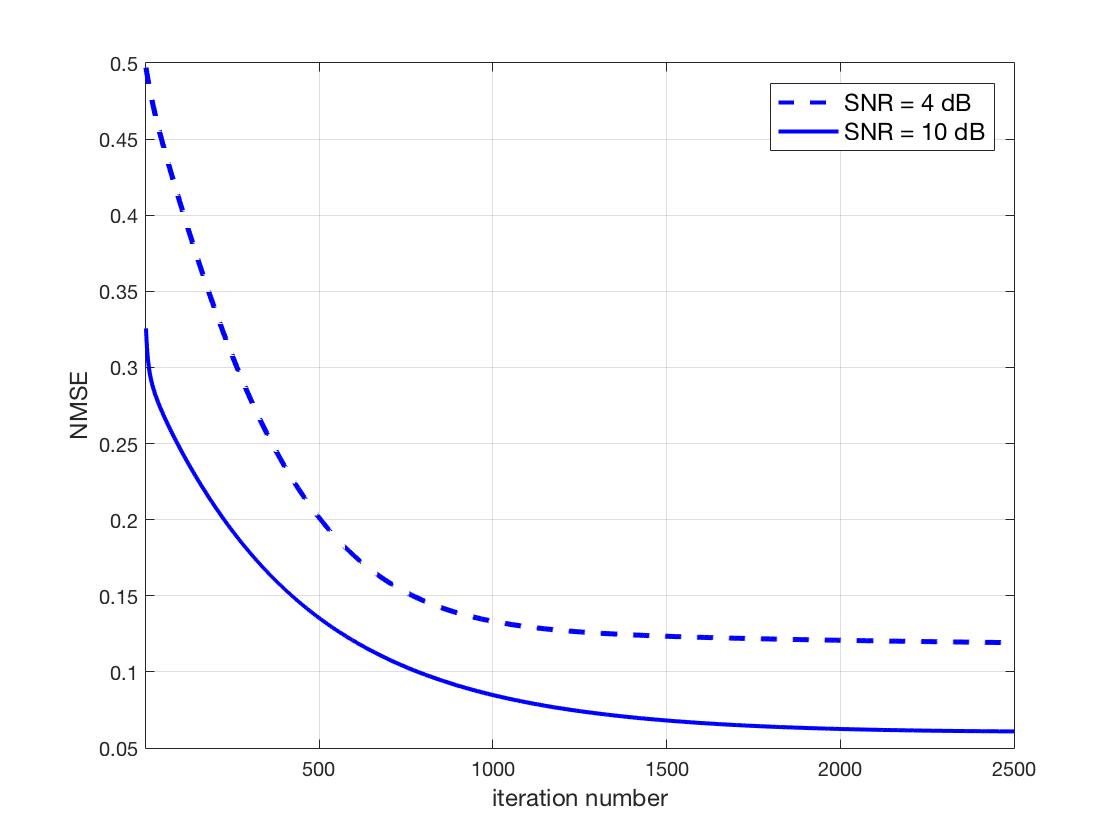} \label{figUPA_UCA:subfiga}}
        \subfigure[]{\includegraphics[width=0.35\linewidth]{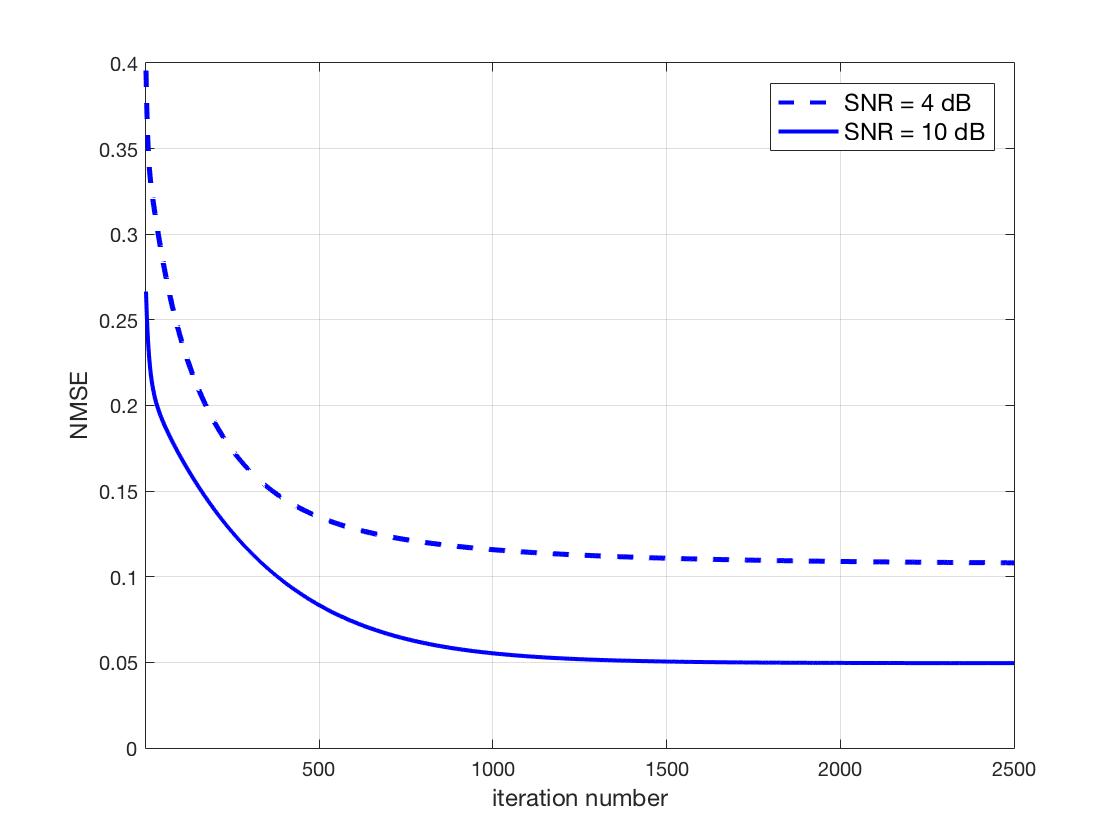}
                \label{figUPA_UCA:subfigb}}
         \subfigure[]{\includegraphics[width=0.35\linewidth]{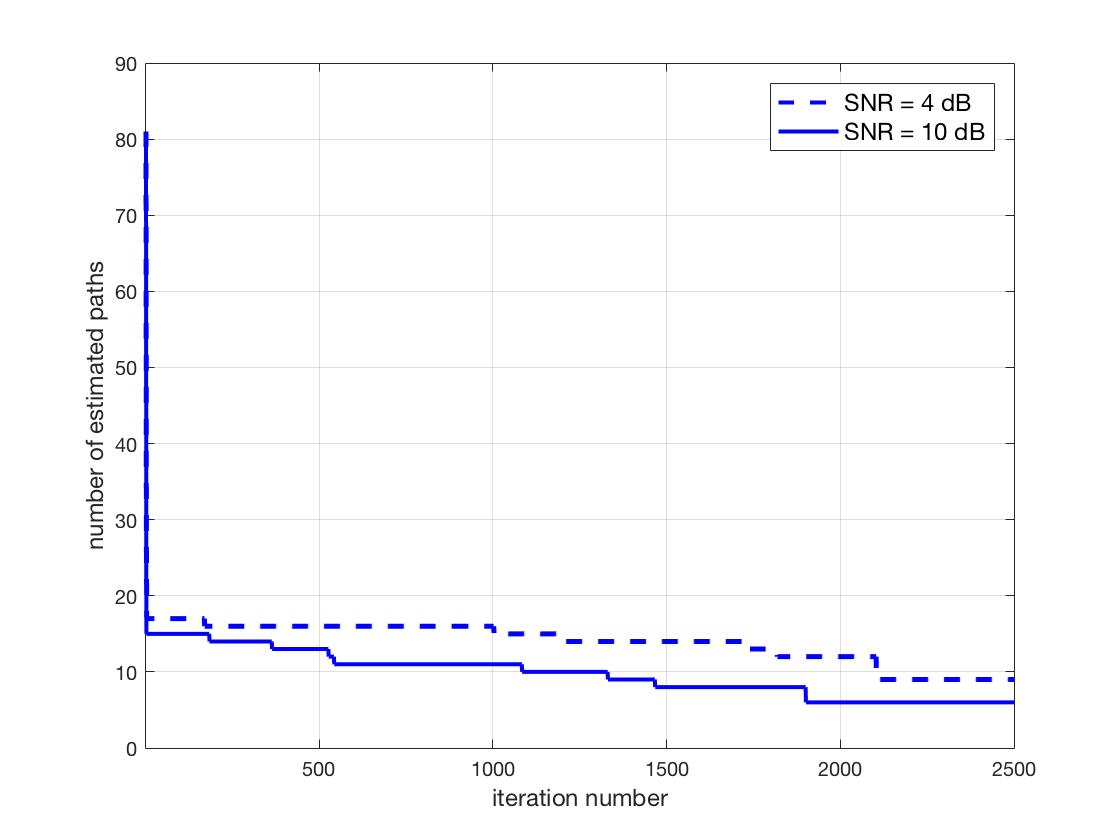} \label{figGD_Purning:subfiga}}
         \subfigure[]{\includegraphics[width=0.35\linewidth]{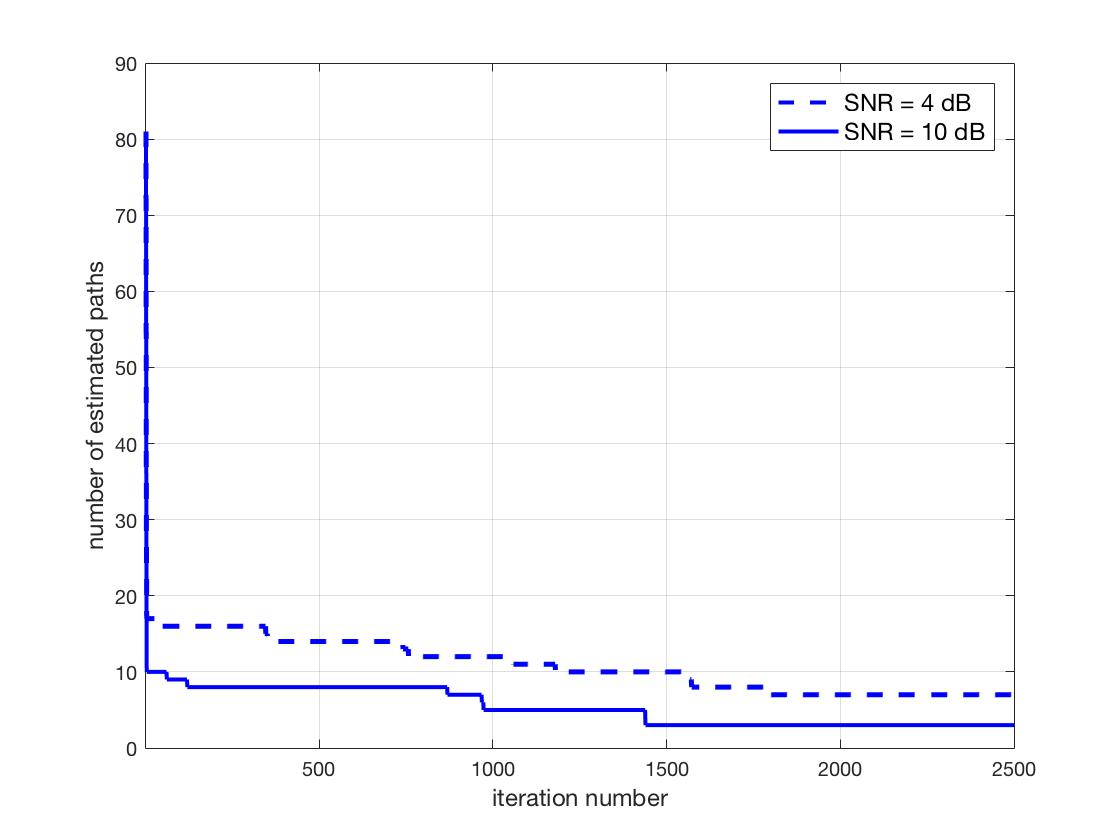}
                \label{figGD_Purning:subfigb}}   
        \caption[UPA_UCA]{Convergence and the number of estimated paths of the proposed gradient descent algorithm for (a)(c) UPA and (b)(d) NUPA.}
        \label{figUPC_UCA_convergence}
\end{figure}
\subsubsection{Comparison of On-grid and Off-grid Algorithms}
We compare the proposed off-grid channel estimator with two existing on-grid  approaches including OMP and MUSIC. For the on-grid algorithms, the continuous  AoA and AoD parameter spaces are discretized into a finite set of grids covering $[-\pi, \pi]$, and the estimation performance  improves with higher grid  resolution (i.e., larger $N_{G}$). However, higher grid  resolution leads to higher  computational complexity. 

In Fig. \ref{figGrid}, the NMSE and running time of different channel estimators are plotted against $N_{G}$. In this simulation, we use CVX solver to compute the 4D atomic-norm-based estimator and the ADMM algorithm to compute the approximate 4D atomic-norm-based  estimator. It is worth noting that the proposed approximate 4D atomic-norm-based estimator has the smallest complexity while its NMSE is much smaller than those of the on-grid algorithms. As the algorithm does not require the grids, its computational complexity does not change with $N_G$. In addition, its NMSE performance is only slightly worse than the 4D atomic-norm-based channel estimator, indicating that the performance loss caused by the approximation of  $ \left\Vert \mathbf H\right\Vert_{\mathcal{A}_M}$ by $\text{SDP}(\mathbf H)$ is small.   
\begin{figure}[H]
        \centering
        \subfigure[]{\includegraphics[width=0.45\linewidth]{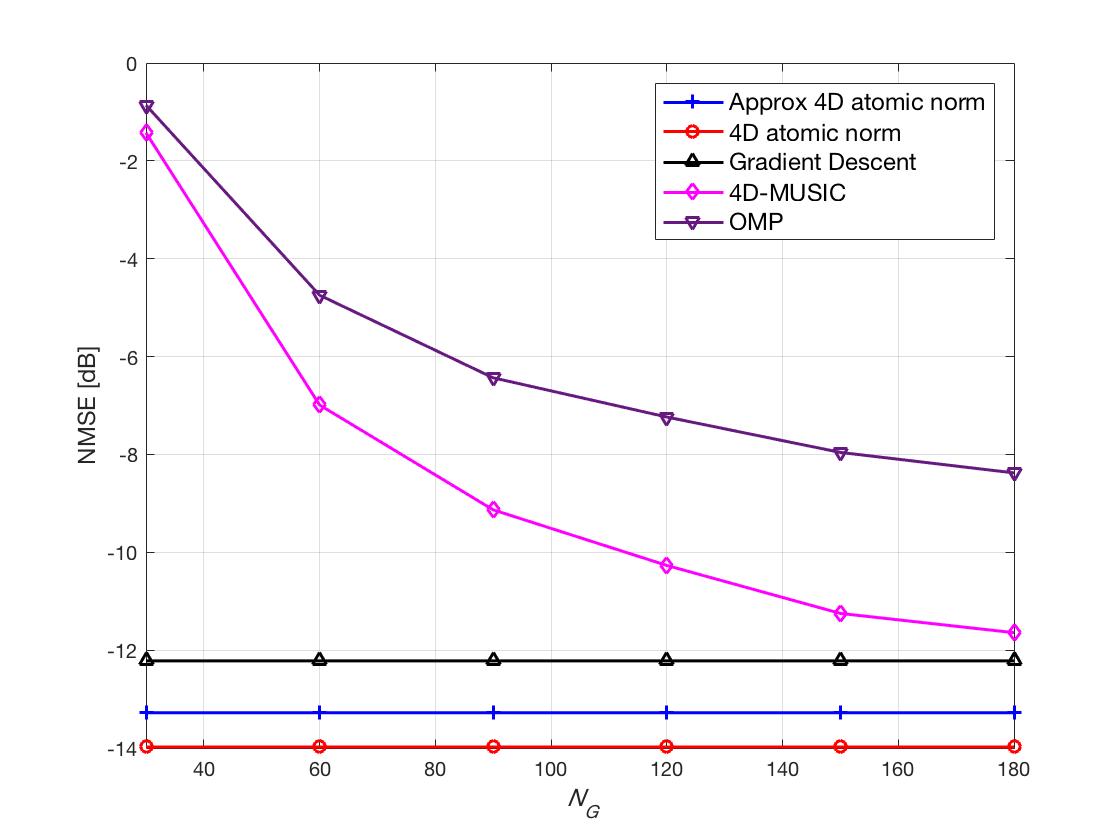} \label{figGrid:subfiga}}
        \subfigure[]{\includegraphics[width=0.45\linewidth]{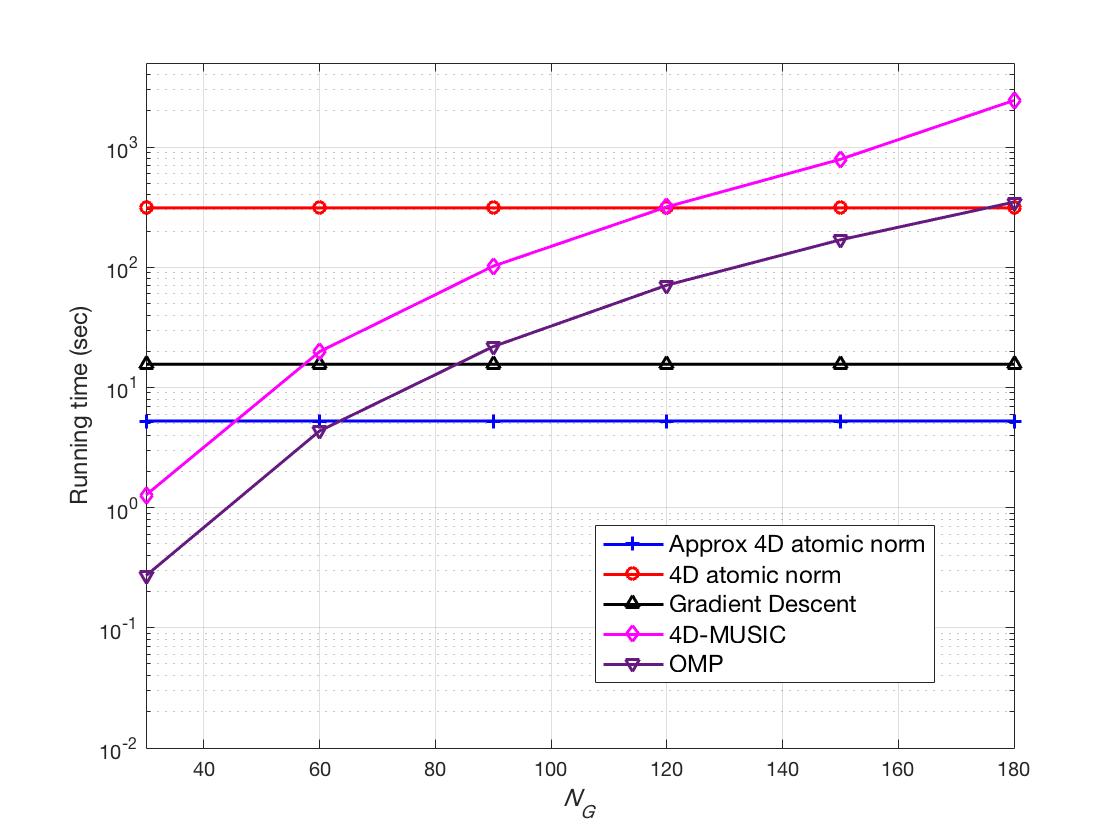}
                \label{figGrid:subfigb}}
        \caption[UPA_UCA]{Comparison of channel  estimation performance and running time against grid size, SNR$=10$ dB. (a) NMSE performance; (b) running time.}
        \label{figGrid}
\end{figure}
\subsubsection{Channel Estimation Performance}
Fig. \ref{figUPA_NMSE}  plots the NMSE curves as a function of SNR for different channel estimators under UPA. The number of grid points are set as $N_{G}=90, 180$ for 4D-MUSIC-based and OMP-based channel estimators.

It is seen that 4D atomic-norm-based and approximate 4D atomic-norm-based estimators outperform the 4D-MUSIC-based and OMP-based estimators. Meanwhile, the 4D atomic-norm-based channel estimator achieves better performance than the approximate 4D atomic-norm-based channel estimator by about from $0.5$ - $0.8$ dB. And the approximate 4D atomic-norm-based channel estimator outperforms the gradient-descent-based algorithm by more than $1.0$ dB. 
\begin{figure}[H]
        \centering
        \includegraphics[width=0.5\linewidth]{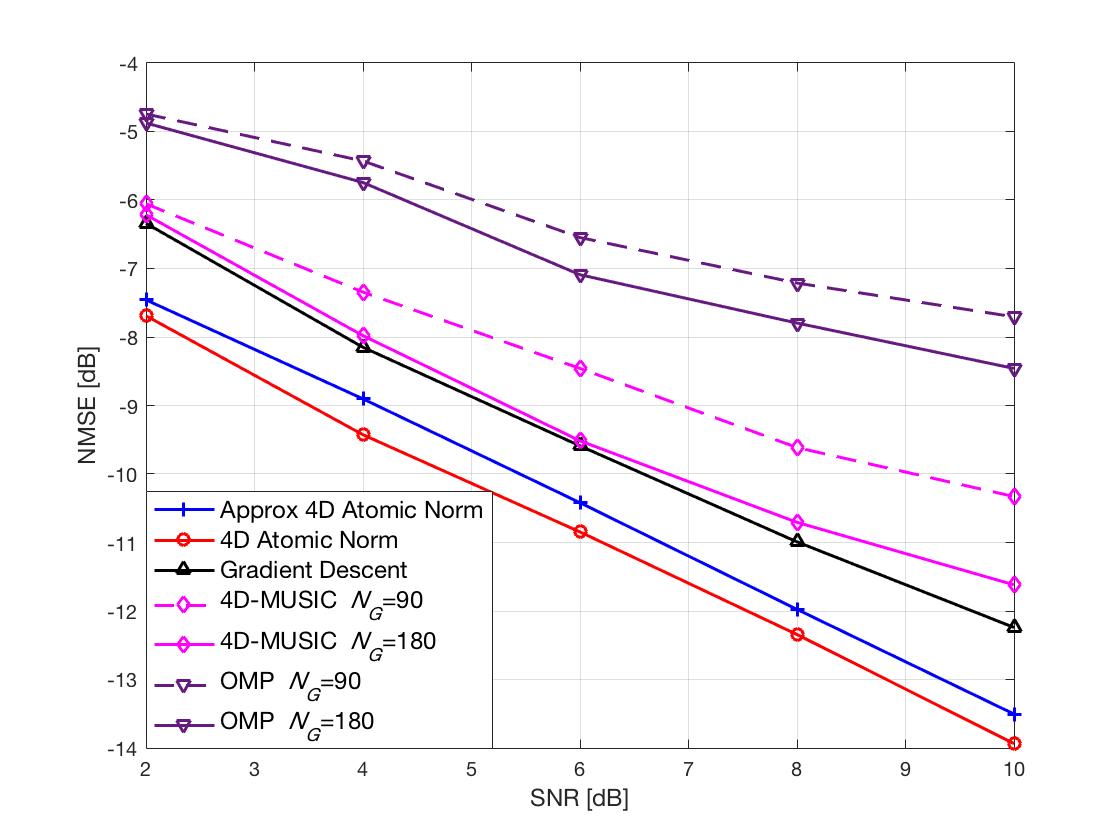}
        \caption[UPA NMSE]{The NMSE performance as a function of SNR for UPA.}
        \label{figUPA_NMSE}
\end{figure}
In Fig. \ref{figNUPA_NMSE}, we plot the NMSE curves as a function of SNR for different channel estimators under NUPA.  It is seen that  the  proposed gradient-descent-based channel estimator outperforms the 4D-MUSIC and  OMP-based channel estimators across the range of SNRs from $2$ to $10$ dB. This is because the proposed gradient-descent-based channel estimator optimizes the frequency basis in each iteration, so it outperforms the on-grid algorithms. 
\begin{figure}[H]
        \centering
        \includegraphics[width=0.5\linewidth]{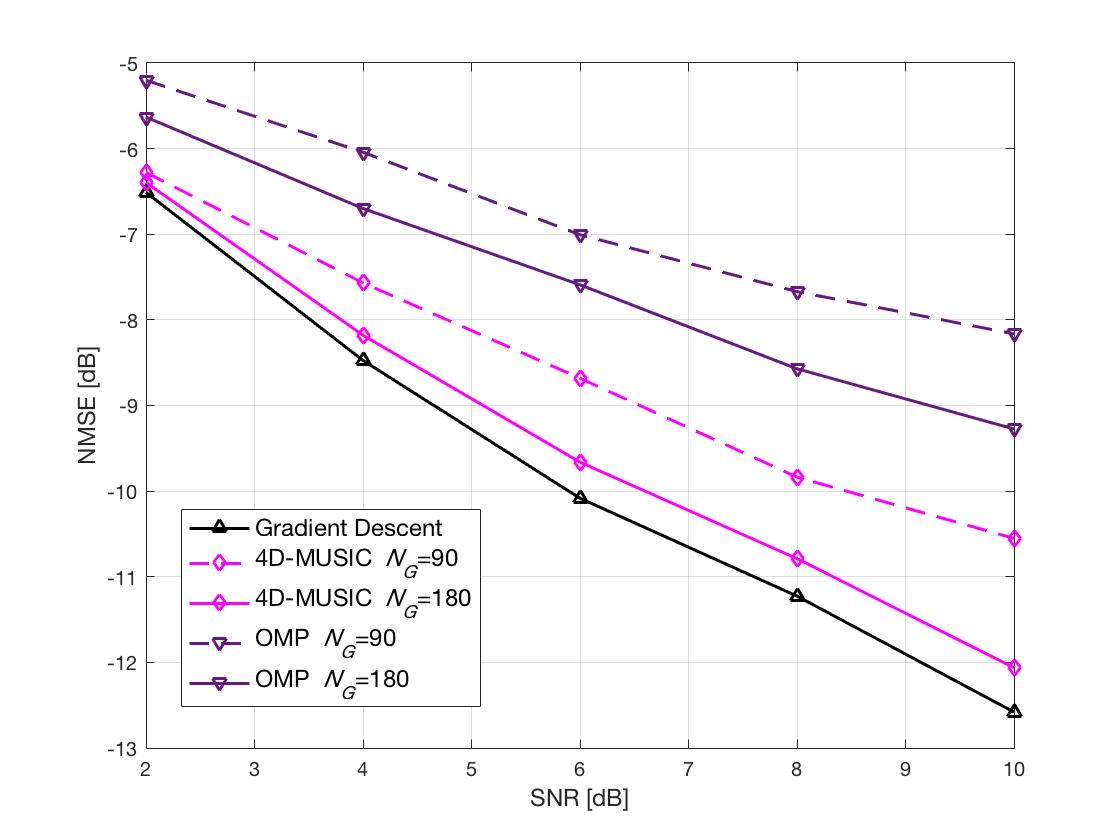} 
        \caption[UPA NMSE]{The NMSE performance as a function of SNR for NUPA.}
        \label{figNUPA_NMSE}
\end{figure}

\section{Conclusions\label{Section_Conclusions}}
In this paper, we have proposed  new channel estimation schemes for mmWave beamformed FD-MIMO systems based on atomic norm minimization  under both UPA and NUPA settings. For the UPA case, we approximate the original large-scale 4D atomic norm minimization problem using a semi-definite program (SDP) containing two decoupled two-level Toeplitz matrices which is then solved by an ADMM-based fast algorithm. For the NUPA case, a gradient descent-based algorithm is provided to obtain a suboptimal solution. Simulation results show that the proposed atomic norm based mmWave FD-MIMO channel estimators provide better performance compared to the existing methods based on compressed sensing and MUSIC algorithms.

\appendix
\subsection{Derivation for \eqref{eq:subgradient_g} and \eqref{eq:subgradient_f} }
For clarity, define $\mathbf{\bar P} = \sqrt{P_{t}}\left(\mathbf P^{T} \otimes \mathbf I_{M}\right)$. Then the gradient with respect to $g_{l,i}$ can be calculated by
\begin{eqnarray} \label{eq:derive_subgradient_g}
\nabla_{ g_{l,i}}  \Gamma\left( \{ \mathbf g_{l}, \mathbf f_{l}, \sigma_{l} \} \right) = \frac{1}{2} \frac{\partial
        \left(\mathbf{y} - \mathbf {\bar P} \mathbf{h} \right)^H\left(\mathbf{y}
        - \mathbf {\bar P} \mathbf{h} \right)}{\partial
        {g}_{l,i}} = \mathcal{R}\left\{ \left(\mathbf {\bar P} \mathbf{h} - \mathbf{y}\right)^{H} \frac{\partial
        \mathbf {\bar P} \mathbf{h} }{\partial
        { g}_{l,i}} \right\},
\end{eqnarray}
where
\begin{align}
 \frac{\partial \mathbf {\bar P} \mathbf{h} }{\partial{g}_{l,i}}
&= \frac{\partial \mathbf {\bar P} \sum_{l=1}^{L}\mathbf{q}_{\rm NU}\left( \mathbf g_{l}, \mathbf f_{l} \right) \sigma_{l}  }{\partial{g}_{l,i}} = \sigma_{l}\mathbf{\bar P}\frac{\partial \mathbf q_{\rm NU}\left( \mathbf g_{l}, \mathbf f_{l} \right)}{\partial g_{l,i}}, \\
 \frac{\partial \mathbf{q}_{\rm NU}(\mathbf{g}_l,\mathbf{f}_l)}{\partial
        g_{l,i}} &= \frac{\partial \mathbf{a}_{\mathbf d_{t}}^{*}\left(\mathbf{
                g}_{l}\right) \otimes \mathbf{ b}_{\mathbf d_{r}}\left(\mathbf{
                f}_{l}\right)}{\partial g_{l,i}} = \frac{\partial \mathbf{a}_{\mathbf
                d_{t}}^{*}\left(\mathbf{g}_{l}\right)}{\partial g_{l,i}}  \otimes \mathbf{b}_{\mathbf d_{r}}\left(\mathbf{f}_{l}\right),\\ \label{eq:differential_g_l}
\frac{\partial \mathbf{a}_{\mathbf d_{t}}^{*}\left(\mathbf{g}_{l}\right)}{\partial
g_{l,i}} &=  \left( \frac{-j{2\pi}}{\lambda} \left[ d_{t,i} (1), \ldots, d_{t,i} (N) \right]^{T}\right) \circ \mathbf{a}_{\mathbf d_{t}}^{*}\left(\mathbf{g}_{l}\right). \end{align}
By plugging \eqref{eq:differential_g_l} into \eqref{eq:derive_subgradient_g}, we have \eqref{eq:subgradient_g}. Similarly we can obtain \eqref{eq:subgradient_f}.

\subsection{ Derivation for \eqref{eq:subgradient_sigma} }
The gradient with respect to $\sigma_{l}$ can be calculated by
\begin{eqnarray} \label{eq:diff_sigma}
\nabla_{ \sigma_{l}} \Gamma\left( \{ \mathbf g_{l}, \mathbf f_{l}, \sigma_{l} \} \right) &=&\frac{\partial \left( \mu\left\Vert \bm{\sigma} \right\Vert_{1}
        + \frac{1}{2} \left\Vert \mathbf{y} - \mathbf {\bar P} \mathbf{h}  \right\Vert^{2}_{2} \right) }{\partial {\sigma}_{l}^*}  \nonumber\\
&=& \frac{\partial \|\boldsymbol{\sigma}\|_1}{\partial {\sigma}_{l}^*} - \frac{1}{2} \frac{\partial \mathbf{y}^H \mathbf {\bar P} \mathbf h}{\partial
        {\sigma}_{l}^*} -
\frac{1}{2} \frac{\partial \mathbf{h}^H  \mathbf {\bar P}^H\mathbf{y}}{\partial
        {\sigma}_{l}^*} + 
        \frac{1}{2}\frac{\partial \mathbf{h}^H \mathbf {\bar P}^H
        \mathbf {\bar P}\mathbf{h} }{\partial
        {\sigma}_{l}^*},
\end{eqnarray}
where
\begin{align}
\frac{\partial \mathbf{y}^H \mathbf {\bar P}\mathbf{h} }{\partial
        {\sigma_{l}}^*} &=
\mathbf{y}^H \mathbf {\bar P} \mathbf q_{\rm NU} \left( \mathbf g_{l}, \mathbf f_{l} \right)
\frac{\partial {\sigma}_{l}}{\partial
       {\sigma}_{l}^*} ={0}, \label{eq:diff_sigma_step_1}\\
\frac{\partial \mathbf{h}^H  \mathbf {\bar P}^H\mathbf{y}}{\partial
        {\sigma}_{l}^*} &=
\left(\left( \mathbf {\bar P}
\mathbf{q}_{\rm NU} \left( \mathbf{g}_{l}, \mathbf{f}_{l}  \right)\right)^H\mathbf{y}\right)^T\frac{\partial
        {\sigma}_{l}^*}{\partial{\sigma}_{l}^*}=
\left(\left( \mathbf {\bar P}
\mathbf{q}_{\rm NU} \left( \mathbf{g}_{l}, \mathbf{f}_{l}  \right)\right)^H\mathbf{y}\right)^T, \label{eq:diff_sigma_step_2} \\
\frac{\partial \mathbf{h}^H \mathbf {\bar P}^H
        \mathbf {\bar P}\mathbf{h}}{\partial
        {\sigma}_{l}^*} & =\left( \left(\mathbf {\bar P}
\mathbf{q}_{\rm NU} \left( \mathbf{g}, \mathbf{f}  \right)  \right)^H
\mathbf {\bar P}
\mathbf{q}_{\rm NU} \left( \mathbf{ g}_{l}, \mathbf{f}_{l}  \right){\sigma}_{l}\right)^T, \label{eq:diff_sigma_step_3}\\
\frac{\partial \|\boldsymbol{\sigma}\|_1}{\partial \sigma_l^*} &= \frac{\partial
        \sum_l|\sigma_l|}{\partial \sigma_l^*}= \frac{\partial |\sigma_l|}{\partial
        \sigma_l^*}=\frac{1}{2}\left(\frac{\partial|\sigma_l| }{\partial \mathcal{R}\{\sigma_l\}}+i\frac{\partial|\sigma_l|
}{\partial \mathcal{I}\{\sigma_l\}}\right) \label{eq:diff_sigma_step_4}\\
& =\frac{1}{2}\left(\frac{\partial
        \sqrt{
                \mathcal{R}^2\{\sigma_l\}+ \mathcal{I}^2\{\sigma_l\}
} }{\partial \mathcal{R}\{\sigma_l\}}+i\frac{\partial \sqrt{
                \mathcal{R}^2\{\sigma_l\}+ \mathcal{I}^2\{\sigma_l\}
} }{\partial \mathcal{I}\{\sigma_l\}}\right)\notag\\
& =\frac{\sigma_l}{2|\sigma_l| }, \label{eq:diff_sigma_step_5} \notag
\end{align}
where $\mathcal{I}\left\{ \cdot \right\}$ returns the imaginary part of the input.
Plugging  \eqref{eq:diff_sigma_step_1}-\eqref{eq:diff_sigma_step_4} into \eqref{eq:diff_sigma}, we  obtain \eqref{eq:subgradient_sigma}.

\bibliography{bfchest_refs}

\begin{thebibliography}{10}
\providecommand{\url}[1]{#1}
\csname url@samestyle\endcsname
\providecommand{\newblock}{\relax}
\providecommand{\bibinfo}[2]{#2}
\providecommand{\BIBentrySTDinterwordspacing}{\spaceskip=0pt\relax}
\providecommand{\BIBentryALTinterwordstretchfactor}{4}
\providecommand{\BIBentryALTinterwordspacing}{\spaceskip=\fontdimen2\font plus
\BIBentryALTinterwordstretchfactor\fontdimen3\font minus
  \fontdimen4\font\relax}
\providecommand{\BIBforeignlanguage}[2]{{%
\expandafter\ifx\csname l@#1\endcsname\relax
\typeout{** WARNING: IEEEtran.bst: No hyphenation pattern has been}%
\typeout{** loaded for the language `#1'. Using the pattern for}%
\typeout{** the default language instead.}%
\else
\language=\csname l@#1\endcsname
\fi
#2}}
\providecommand{\BIBdecl}{\relax}
\BIBdecl

\bibitem{5G_it_work}
T.~S. Rappaport, S.~Sun, R.~Mayzus, H.~Zhao, Y.~Azar, K.~Wang, G.~N. Wong,
  J.~K. Schulz, M.~Samimi, and F.~Gutierrez, ``Millimeter wave mobile
  communications for 5g cellular: It will work!'' \emph{IEEE Access}, vol.~1,
  pp. 335--349, 2013.

\bibitem{3D-MIMO}
X.~Cheng, B.~Yu, L.~Yang, J.~Zhang, G.~Liu, Y.~Wu, and L.~Wan, ``Communicating
  in the real world: 3d mimo,'' \emph{IEEE Wireless Communications}, vol.~21,
  no.~4, pp. 136--144, August 2014.

\bibitem{mmWave_BF_for_backhaul_and_small_cell}
S.~Hur, T.~Kim, D.~J. Love, J.~V. Krogmeier, T.~A. Thomas, and A.~Ghosh,
  ``Millimeter wave beamforming for wireless backhaul and access in small cell
  networks,'' \emph{{IEEE} Trans. Commun.}, vol.~61, no.~10, pp. 4391--4403,
  October 2013.

\bibitem{FD-MIMO_next_gen_cell_tech}
Y.~H. Nam, B.~L. Ng, K.~Sayana, Y.~Li, J.~Zhang, Y.~Kim, and J.~Lee,
  ``Full-dimension mimo (fd-mimo) for next generation cellular technology,''
  \emph{IEEE Communications Magazine}, vol.~51, no.~6, pp. 172--179, June 2013.

\bibitem{Beamforming_Survey}
S.~Kutty and D.~Sen, ``Beamforming for millimeter wave communications: An
  inclusive survey,'' \emph{IEEE Communications Surveys Tutorials}, vol.~18,
  no.~2, pp. 949--973, Secondquarter 2016.

\bibitem{Beam_codebook_vased_beamforming_protocol_for_mmWave}
J.~Wang, ``Beam codebook based beamforming protocol for multi-gbps
  millimeter-wave wpan systems,'' \emph{IEEE Journal on Selected Areas in
  Communications}, vol.~27, no.~8, pp. 1390--1399, October 2009.

\bibitem{3D_mmWave_channel_model}
M.~K. Samimi and T.~S. Rappaport, ``3-d millimeter-wave statistical channel
  model for 5g wireless system design,'' \emph{IEEE Transactions on Microwave
  Theory and Techniques}, vol.~64, no.~7, pp. 2207--2225, July 2016.

\bibitem{3D_mmWave_channel_model_proposal}
T.~A. Thomas, H.~C. Nguyen, G.~R. MacCartney, and T.~S. Rappaport, ``3d mmwave
  channel model proposal,'' in \emph{2014 IEEE 80th Vehicular Technology
  Conference (VTC2014-Fall)}, Sept 2014, pp. 1--6.

\bibitem{Spatially_Sparse-Precoding_in_MM_Wave}
O.~E. Ayach, S.~Rajagopal, S.~Abu-Surra, Z.~Pi, and R.~W. Heath, ``Spatially
  sparse precoding in millimeter wave mimo systems,'' \emph{{IEEE} Trans.
  Wireless Commun.}, vol.~13, no.~3, pp. 1499--1513, March 2014.

\bibitem{Channel_Estimation_and_Hybrid_Precoding}
A.~Alkhateeb, O.~E. Ayach, G.~Leus, and R.~W. Heath, ``Channel estimation and
  hybrid precoding for millimeter wave cellular systems,'' \emph{IEEE Journal
  of Selected Topics in Signal Processing}, vol.~8, no.~5, pp. 831--846, Oct
  2014.

\bibitem{CS_based_multi_users_mmWave}
A.~Alkhateeby, G.~Leusz, and R.~W. Heath, ``Compressed sensing based multi-user
  millimeter wave systems: How many measurements are needed?'' in \emph{2015
  IEEE International Conference on Acoustics, Speech and Signal Processing
  (ICASSP)}, April 2015, pp. 2909--2913.

\bibitem{2D_MUSIC_mmWave}
Z.~Guo, X.~Wang, and W.~Heng, ``Millimeter-wave channel estimation based on
  two-dimensional beamspace music method,'' \emph{IEEE Transactions on Wireless
  Communications}, vol.~PP, no.~99, 2017.

\bibitem{Millimeter_Wave_MIMO_channel_estimation_Adaptive_CS}
S.~Sun and T.~S. Rappaport, ``Millimeter wave mimo channel estimation based on
  adaptive compressed sensing,'' in \emph{2017 IEEE International Conference on
  Communications Workshops (ICC Workshops)}, May 2017, pp. 47--53.

\bibitem{Donoho_ComprssedSensing}
D.~L. Donoho, ``Compressed sensing,'' \emph{{IEEE} Trans. Inform. Theory},
  vol.~52, no.~4, pp. 1289--1306, April 2006.

\bibitem{book_mmWav_Communications}
T.~S. Rappaport, R.~W. Heath, R.~C. Daniels, and J.~N. Murdock,
  \emph{Millimeter Wave Wireless Communication}.\hskip 1em plus 0.5em minus
  0.4em\relax Prentice-Hall, 2014.

\bibitem{Compressed_Sensing_Off_the_Grid}
G.~Tang, B.~N. Bhaskar, P.~Shah, and B.~Recht, ``Compressed sensing off the
  grid,'' \emph{{IEEE} Trans. Inform. Theory}, vol.~59, no.~11, pp. 7465--7490,
  Nov 2013.

\bibitem{Bhaskar_AtomicNormDenoise}
B.~N. Bhaskar, G.~Tang, and B.~Recht, ``Atomic norm denoising with applications
  to line spectral estimation,'' \emph{{IEEE} Trans. Sig. Proc.}, vol.~61,
  no.~23, pp. 5987--5999, 2013.

\bibitem{Gridless_super_resolution_direction_finding}
J.~Steinwandt, F.~Roemer, C.~Steffens, M.~Haardt, and M.~Pesavento, ``Gridless
  super-resolution direction finding for strictly non-circular sources based on
  atomic norm minimization,'' in \emph{2016 50th Asilomar Conference on
  Signals, Systems and Computers}, Nov 2016, pp. 1518--1522.

\bibitem{Atomic_Norm_Denoising}
B.~N. Bhaskar, G.~Tang, and B.~Recht, ``Atomic norm denoising with applications
  to line spectral estimation,'' \emph{{IEEE} Trans. Sig. Proc.}, vol.~61,
  no.~23, pp. 5987--5999, Dec 2013.

\bibitem{ANM_DoA_estimation}
Y.~Zhang, G.~Zhang, and X.~Wang, ``Array covariance matrix-based atomic norm
  minimization for off-grid coherent direction-of-arrival estimation,'' in
  \emph{2017 IEEE International Conference on Acoustics, Speech and Signal
  Processing (ICASSP)}, March 2017, pp. 3196--3200.

\bibitem{ANM_2D_DoA_estimation}
Z.~Tian, Z.~Zhang, and Y.~Wang, ``Low-complexity optimization for
  two-dimensional direction-of-arrival estimation via decoupled atomic norm
  minimization,'' in \emph{2017 IEEE International Conference on Acoustics,
  Speech and Signal Processing (ICASSP)}, March 2017, pp. 3071--3075.

\bibitem{ANM_MIMO_CHEST}
P.~Zhang, L.~Gan, S.~Sun, and C.~Ling, ``Atomic norm denoising-based channel
  estimation for massive multiuser mimo systems,'' in \emph{2015 IEEE
  International Conference on Communications (ICC)}, June 2015, pp. 4564--4569.

\bibitem{ANM_Linear_System_Identification}
P.~Shah, B.~N. Bhaskar, G.~Tang, and B.~Recht, ``Linear system identification
  via atomic norm regularization,'' in \emph{2012 IEEE 51st IEEE Conference on
  Decision and Control (CDC)}, Dec 2012, pp. 6265--6270.

\bibitem{structured_non_unifromly_spaced_antenna_array}
W.~Liu, Z.~Wang, C.~Sun, S.~Chen, and L.~Hanzo, ``Structured non-uniformly
  spaced rectangular antenna array design for fd-mimo systems,'' \emph{{IEEE}
  Trans. Wireless Commun.}, vol.~16, no.~5, pp. 3252--3266, May 2017.

\bibitem{NUPA_mmwave}
P.~Wang, Y.~Li, Y.~Peng, S.~C. Liew, and B.~Vucetic, ``Non-uniform linear
  antenna array design and optimization for millimeter-wave communications,''
  \emph{IEEE Transactions on Wireless Communications}, vol.~15, no.~11, pp.
  7343--7356, Nov 2016.

\bibitem{FD-MIMO_Using_Uniform_Planar_Arrays}
J.~Choi, K.~Lee, D.~J. Love, T.~Kim, and R.~W. Heath, ``Advanced limited
  feedback designs for fd-mimo using uniform planar arrays,'' in \emph{2015
  IEEE Global Communications Conference (GLOBECOM)}, Dec 2015, pp. 1--6.

\bibitem{3D_MIMO_ULA}
Y.~Han, S.~Jin, X.~Li, Y.~Huang, L.~Jiang, and G.~Wang, ``Design of double
  codebook based on 3d dual-polarized channel for multiuser mimo system,''
  \emph{EURASIP Journal on Advances in Signal Processing}, vol. 2014, no.~1, p.
  111, 2014.

\bibitem{DFT_beamforming}
D.~Yang, L.~L. Yang, and L.~Hanzo, ``Dft-based beamforming weight-vector
  codebook design for spatially correlated channels in the unitary precoding
  aided multiuser downlink,'' in \emph{2010 IEEE International Conference on
  Communications}, May 2010, pp. 1--5.

\bibitem{Exploiting_spatial_sparsity_for_estimating_channels_of_hybrid_MIMO_systems_in_mmWave}
J.~Lee, G.~T. Gil, and Y.~H. Lee, ``Exploiting spatial sparsity for estimating
  channels of hybrid mimo systems in millimeter wave communications,'' in
  \emph{2014 IEEE Global Communications Conference}, Dec 2014, pp. 3326--3331.

\bibitem{MUSIC_DoA_Estimation}
P.~Gupta and S.~P. Kar, ``Music and improved music algorithm to estimate
  direction of arrival,'' in \emph{2015 International Conference on
  Communications and Signal Processing (ICCSP)}, April 2015, pp. 0757--0761.

\bibitem{Joint_Transmission_Reception_Diversity_Smoothing}
W.~Zhang, W.~Liu, J.~Wang, and S.~Wu, ``Joint transmission and reception
  diversity smoothing for direction finding of coherent targets in mimo
  radar,'' \emph{IEEE Journal of Selected Topics in Signal Processing}, vol.~8,
  no.~1, pp. 115--124, Feb 2014.

\bibitem{Exact_Joint_Sparse}
Z.~Yang and L.~Xie, ``Exact joint sparse frequency recovery via optimization
  methods,'' \emph{{IEEE} Trans. Sig. Proc.}, vol.~64, no.~19, pp. 5145--5157,
  Oct 2016.

\bibitem{Vandermonde_Decomposition_Multilevel_Toeplitz_Matrices}
Z.~Yang, L.~Xie, and P.~Stoica, ``Vandermonde decomposition of multilevel
  toeplitz matrices with application to multidimensional super-resolution,''
  \emph{{IEEE} Trans. Inform. Theory}, vol.~62, no.~6, pp. 3685--3701, June
  2016.

\bibitem{Compressive_Recovery_2D_off_grid}
Y.~Chi and Y.~Chen, ``Compressive recovery of 2-d off-grid frequencies,'' in
  \emph{2013 Asilomar Conference on Signals, Systems and Computers}, Nov 2013,
  pp. 687--691.

\bibitem{ChiC15}
------, ``Compressive two-dimensional harmonic retrieval via atomic norm
  minimization,'' \emph{{IEEE} Trans. Sig. Proc.}, vol.~63, no.~4, pp.
  1030--1042, 2015.

\bibitem{candes2014towards}
E.~J. Cand{\`e}s and C.~Fernandez-Granda, ``Towards a mathematical theory of
  super-resolution,'' \emph{Communications on Pure and Applied Mathematics},
  vol.~67, no.~6, pp. 906--956, 2014.

\bibitem{Super-Resolution-Radar}
L.~Zheng and X.~Wang, ``Super-resolution delay-doppler estimation for ofdm
  passive radar,'' \emph{{IEEE} Trans. Sig. Proc.}, vol.~65, no.~9, pp.
  2197--2210, May 2017.

\bibitem{2D_AN_approx}
Z.~Tian, Z.~Zhang, and Y.~Wang, ``Low-complexity optimization for
  two-dimensional direction-of-arrival estimation via decoupled atomic norm
  minimization,'' in \emph{2017 IEEE International Conference on Acoustics,
  Speech and Signal Processing (ICASSP)}, March 2017, pp. 3071--3075.

\bibitem{boyd2011distributed}
S.~Boyd, N.~Parikh, E.~Chu, B.~Peleato, and J.~Eckstein, ``Distributed
  optimization and statistical learning via the alternating direction method of
  multipliers,'' \emph{Foundations and Trends{\textregistered} in Machine
  Learning}, vol.~3, no.~1, pp. 1--122, 2011.

\bibitem{sturm1999using}
J.~F. Sturm, ``Using sedumi 1.02, a matlab toolbox for optimization over
  symmetric cones,'' \emph{Optimization methods and software}, vol.~11, no.
  1-4, pp. 625--653, 1999.

\bibitem{toh1999sdpt3}
K.-C. Toh, M.~J. Todd, and R.~H. Tutuncu, ``Sdpt3-a matlab software package for
  semidefinite programming, version 1.3,'' \emph{Optimization methods and
  software}, vol.~11, no. 1-4, pp. 545--581, 1999.

\bibitem{NUPA_mmWaveMIMOLinks}
E.~Torkildson, C.~Sheldon, U.~Madhow, and M.~Rodwell, ``Nonuniform array design
  for robust millimeter-wave mimo links,'' in \emph{GLOBECOM 2009 - 2009 IEEE
  Global Telecommunications Conference}, Nov 2009, pp. 1--7.

\bibitem{nai2010beampattern}
S.~E. Nai, W.~Ser, Z.~L. Yu, and H.~Chen, ``Beampattern synthesis for linear
  and planar arrays with antenna selection by convex optimization,'' \emph{IEEE
  Transactions on Antennas and Propagation}, vol.~58, no.~12, pp. 3923--3930,
  2010.

\bibitem{li2016approximate}
Q.~Li and G.~Tang, ``Approximate support recovery of atomic line spectral
  estimation: A tale of resolution and precision,'' in \emph{Signal and
  Information Processing (GlobalSIP), 2016 IEEE Global Conference on}.\hskip
  1em plus 0.5em minus 0.4em\relax IEEE, 2016, pp. 153--156.

\bibitem{boumal2015low}
N.~Boumal and P.-A. Absil, ``Low-rank matrix completion via preconditioned
  optimization on the grassmann manifold,'' \emph{Linear Algebra and its
  Applications}, vol. 475, pp. 200--239, 2015.

\bibitem{OMP}
T.~T. Cai and L.~Wang, ``Orthogonal matching pursuit for sparse signal recovery
  with noise,'' \emph{{IEEE} Trans. Inform. Theory}, vol.~57, no.~7, pp.
  4680--4688, July 2011.

\bibitem{cvx}
M.~Grant and S.~Boyd, ``{CVX}: Matlab software for disciplined convex
  programming, version 2.1,'' \url{http://cvxr.com/cvx}, Mar. 2014.

\end{thebibliography}


\printindex

\end{document}